\newtheorem{definition}{Definition}
\newtheorem{thm}{Theorem}
\newtheorem{prop}{Proposition}
\newtheorem{coro}{Corollary}
\newtheorem{lemma}{Lemma}
\newtheoremstyle{break}% name
  {9pt}%      Space above, empty = `usual value'
  {9pt}%      Space below
  {}% Body font
  {}%         Indent amount (empty = no indent, \parindent = para indent)
  {\bfseries}% Thm head font
  {}%        Punctuation after thm head
  {\newline}% Space after thm head: \newline = linebreak
  {}%         Thm head spec
\newcommand{\reell}{{\mathbb R}}
\newcommand{\erw}{\textnormal{E}}
\newcommand{\NormVert}{\textbf{{N}}}
\newcommand{\FDR}{\textnormal{FDR}}
\newcommand{\FDP}{\textnormal{FDP}}
\newcommand{\FDPSU}{\textnormal{FDP-SU}}
\newcommand{\FDPSD}{\textnormal{FDP-SD}}
\newcommand{\FWER}{\textnormal{FWER}}
\newcommand{\kFWER}{\textnormal{k-FWER}}
\newcommand{\kFDP}{\textnormal{kFDP}}
\newcommand{\kFWERSU}{\textnormal{k-FWER-SU}}
\newcommand{\kFWERSD}{\textnormal{k-FWER-SD}}
\begin{document}
%\frontmatter
\title[A sufficient criterion]{A sufficient criterion for control of generalised error rates in multiple testing}
%\author{Sebastian D\"ohler}
%\\
%Darmstadt University of Applied Sciences}
%\address{Sebastian D\"ohler\\ University of Applied Sciences\\
%Faculty of Mathematics and Science\\
%D-64295 Darmstadt\\
%Germany}
%\email{sebastian.doehler@h-da.de}
\date{\today}
%\subjclass[2000]{Primary 62P05; Secondary 91G40, 62J15}
%   % AMS keywords (used in AMS journals)
%   \keywords{Out-of-sample validation, Credit risk, Multiple test procedures, Familywise error rate, False Discovery rate, Discrete test statistics}
\begin{abstract}
Based on the work of \citet{RomanoShaikh2006,RomanoShaikh2006AOS,LehmannRomano2005} we give a sufficient criterion for controlling generalised error rates for arbitrarily dependent $p$-values. This criterion is formulated in terms of matrices associated with the corresponding error rates and thus it is possible to view the corresponding critical constants as solutions of sets of certain linear inequalities. This property can in some cases be used to improve the power of existing procedures by finding optimal solutions to an associated linear programming problem. %We discuss certain stability properties of the resulting multiple testing procedures and show how linear programming methods can be used to obtain uniformly more powerful multiple testing procedures. 
%We apply multiple testing procedures to the validation of estimated default probabilities in credit rating systems. The goal is to identify rating classes for which the probability of default is estimated inaccurately, while still maintaining a predefined level of committing type I errors as measured by the familywise error rate (FWER) and the false discovery rate (FDR). For FWER, we also consider procedures that take possible discreteness of the data resp. test statistics into account. The performance of these methods is illustrated in a simulation setting and for empirical default data.
\end{abstract}
\maketitle

\section{Introduction}
Consider the problem of testing $n$ hypotheses $H_1, \ldots, H_n$ simultaneously. A classical approach to dealing with the multiplicity problem is to control the familywise error rate (FWER) i.e. the probability of one or more false rejections. However, when the number $n$ of hypotheses is large, the ability to reject false hypotheses is small. Therefore, alternative type I error rates have been proposed that relax control of FWER in order to reject more false hypotheses (for a survey, see e.g. \citet{DudoitLaan2007}).

One such generalised error rate is the \kFWER, i.e. the probability of $k$ or more false rejections for some integer $k\ge 1$, $\kFWER\le \alpha$ considered by \citet{HommHoff87} and \citet{LehmannRomano2005}. For $k=1$ the usual FWER is obtained. Alternatively, instead of controlling the absolute number of false rejections it may be desirable to control the proportion of false rejections amongst all rejected hypotheses. This ratio is called the false discovery proportion (FDP). More specifically, if $R$ denotes the number of rejected hypotheses and $V$ the number of falsely rejected hypotheses, then $\FDP=V/R$ (and equal to $0$ if there are no rejections). For the $\FDP$, mainly two types of control have been considered in the literature. One aim might be to control the tail probability $P(\FDP>\gamma)\le \alpha$ for some user-specified value $\gamma \in [0,1)$. This error measure has been termed $\gamma-\FDP$ by \citet{LehmannRomano2005} and tail probability for the proportion of false positives ($TPPFP(\gamma)$) in \citet{DudoitLaan2007}. Instead of controlling a specific tail probability, the false discovery rate (FDR) requires that $\FDR=\erw(\FDP)\le \gamma$, i.e. control in the mean. As \citet{RomanoWolf2010} point out, probabilistic control of the FDP allows one to make useful statements about the realized FDP in applications, whereas this is not possible when controlling the FDR.
 
Recently, a number of methods have been proposed that control these generalised error rates under various assumptions. In this paper we focus on multiple testing procedures that are based on marginal $p$-values and are valid for finite sample sizes under no assumptions on the type of dependency of these $p$ values. For $\kFWER$ and $\gamma-\FDP$, step-up and step-down methods have been obtained in \citet{RomanoShaikh2006,RomanoShaikh2006AOS,LehmannRomano2005}. For $\FDR$, \citet{BenjaminiYekutieli01} have shown that a rescaled version of the original step-up procedure of \citet{BenjaminiHochberg95} controls the $\FDR$ under arbitrary dependencies. \citet{GuoRao2008} have extended these results and have also given corresponding upper bounds for step-down $\FDR$ procedures (see \citet{GuoRao2008} and the references cited therein for more details). 

The aim of this paper is two-fold. First, we present a sufficient condition for control of $\kFWER$ and $\gamma-\FDP$ based on matrices that are associated with a specific error-rate and direction of stepping. This result is mainly a rephrasing of results obtained by \citet{RomanoShaikh2006,RomanoShaikh2006AOS,LehmannRomano2005}. In the second step we show how the rescaled procedures introduced by \citet{RomanoShaikh2006,RomanoShaikh2006AOS,LehmannRomano2005} can in some cases be improved. In particular, we introduce a linear programming approach which uses the above-mentioned matrices.

The paper is organized as follows. First, we introduce some terminology and assumptions that will be used in what follows. In section three we will state the main theoretical results which will be used in the following section to define new modified FDP controlling procedures. Section \ref{sec:Proofs} contains the proof of the main theorem. In section \ref{sec:SimulationStudy} we investigate the power of the new modified procedures in a simulation setting and in section \ref{sec:EmpiricalApplications} we apply them to the analysis of empirical data. The paper concludes with a discussion. 

\section{Notation, Definitions and assumptions} \label{sec:NotationAssumptions}
In this section we introduce some terminology and assumptions that will be used in the sequel. When testing hypotheses $H_1, \ldots,H_n$, we assume that corresponding $p$-values $PV_1, \ldots, PV_n$ are available. For any true hypothesis $i$ we assume that the distribution of the $p$-values $PV_i$ is stochastically larger than a uniform rv, i.e.
\begin{align*}
P(PV_i\le u)&\le u
\end{align*}
for all $u \in (0,1)$. Let $PV_{(1)}\le \cdots \le PV_{(n)}$ denote the ordered $p$-values and $H_{(1)}, \ldots, H_{(n)}$ the associated (null-) hypotheses. Let 
\begin{align*}
\mathcal{C}&=\{c \in \reell_{+}^n | c_1 \le \cdots \le c_n\}
\end{align*}
denote the set of non-decreasing non-negative critical constants. For $c \in \mathcal{C}$ the associated step-up procedure rejects hypotheses $H_{(1)}, \ldots, H_{(k)}$, where $k= \max \{i| PV_{(i)} \le c_i\}$. If no such $i$ exists, no hypothesis is rejected. For the corresponding step-down procedure, reject $H_{(1)}, \ldots, H_{(k)}$, where $k= \max \{i| PV_{(j)} \le c_j, \quad j=1, \ldots, i\}$. If $PV_{(1)}>c_1$, no hypothesis is rejected. 

\subsection{Generalized error rates} \label{ssec:GeneralizedErrorRates}
In the following definition we introduce the sets of $\kFWER$- and $\FDP$-controlling procedures weconsider in this paper.
\begin{definition} \label{def:ControllingPorcedures}
Let $\alpha \in (0,1)$. 
\begin{itemize}
	\item[(a)] For $1 \le k \le n$ define the set of step-up procedures that (strongly) control the $\kFWER$:
	\begin{align*}
S^{\kFWERSU}(\alpha,k)&=	\{ c\in \mathcal{C} |\max_{1\le |I| \le n} \kFWERSU (c) \le \alpha\},\\
S^{\kFWERSD}(\alpha,k)&=	\{ c\in \mathcal{C} |\max_{1\le |I| \le n} \kFWERSD (c) \le \alpha\}
	\end{align*}
	\item[(b)] For $\gamma \in [0,1)$ define the set of step-up and step-down procedures that (strongly) control the $\FDP$:
	\begin{align*}
S^{\FDPSU}(\alpha,\gamma)&=	\{ c\in \mathcal{C} |\max_{1\le |I| \le n} P(\FDPSU(c)> \gamma) \le \alpha\},\\
S^{\FDPSD}(\alpha,\gamma)&=	\{ c\in \mathcal{C} |\max_{1\le |I| \le n} P(\FDPSD(c)> \gamma) \le \alpha\}.
	\end{align*}
\end{itemize}
\end{definition}
In order to formulate the main results of this paper we introduce subsets of $\mathcal{C}$ that are defined by 
 \begin{align*}
\mathcal{F}(A)&:= \{c \in \mathcal{C} | ||A\cdot c||_{\infty} \le 1\}, 
	\end{align*}
where $A \in \reell_{+}^{n \times n}$ and $||x||_{\infty}  =\max_{1\le i \le n} |x|_i$ denotes the maximum norm. The elements of $\mathcal{F}(A)$ can be interpreted as the set of feasible points given by a set of linear constraints (inequalities). We will show in Theorem \ref{theorem:MainTheorem} that for each error rate $\kFWER$ and $\FDP$ and direction of stepping we can define an associated matrix $A$, such that any procedure in $\alpha \cdot \mathcal{F}(A)$ controls the correponding error rate at level $\alpha$. 
\subsection{Associated matrices} \label{ssec:AssociatedMatrices}
In this section we introduce the matrices associated with the error rates mentioned above.
\begin{definition}[\kFWERSU] \label{def:MatrixkFWERSU}
Let $k \in \{1,\ldots,n\}$. Define
\begin{align}
A^{\kFWERSU}_{ij}(k)&=i\cdot
\begin{cases}
0 & \qquad i<k\\
0 & \qquad i\ge k, j< n+k-i\\
 \left(\frac{1}{j-n+i} -\frac{1}{j-n+i+1}\right) & \qquad i\ge k, n+k-i \le j <n\\
\frac{1}{i} & \qquad i\ge k,  j =n
\end{cases}
\end{align}
\end{definition}
\begin{definition}[\kFWERSD] \label{def:MatrixkFWERSD}
Let $k \in \{1,\ldots,n\}$. Define
\begin{align}
A^{\kFWERSD}_{ij}(k)&= i \cdot
\begin{cases}
\frac{1}{k} & \qquad i\ge k, j=n-i+k\\
0 & \qquad \text{else}
\end{cases}
\end{align}
\end{definition}

\begin{definition} \label{def:AuxFDPSU}
Let $\gamma \in [0,1)$ and define $m(j)=\lfloor \gamma j \rfloor +1$ where $\lfloor x \rfloor$ is the greatest integer $\le x$. For $i \in \{1, \ldots,n\}$ define 
\begin{align*}
\widetilde{M}(i)&:=\widetilde{M}(\gamma,n,i)=\max\{l \in \{1,\ldots,n\}| m(l) \le i\}\\
g_i(l)&:= \max \{i-n+l,m(l)\}\\
M(i)&:=g_i(\widetilde{M}(i))\\
t_k (i)&:=\max g_i^{-1}(\{k\}), \text{ for } k \in \{1,\ldots,M(i)\}.
\end{align*} 
\end{definition}

\begin{definition}[\FDPSU] \label{def:MatrixFDPSU}
Let $\gamma \in [0,1)$. Using the notation from definition \ref{def:AuxFDPSU}, define
\begin{align}
%A^{\FDPSU}_{ij}(\gamma)&=
%\begin{cases}
%i & \text{for $j=t_1(i)$}\\
%0 & else
%\end{cases}
%\intertext{if $M(i)=1$. If $M(i)>1$ define}
A^{\FDPSU}_{ij}(\gamma) &=i \cdot 
\begin{cases}
\left(\frac{1}{k}- \frac{1}{k+1} \right) & \text{for $j=t_k(i)$, if $1\le k < M(i)$}\\
\frac{1}{M(i)} & \text{for $j=t_{M(i)}(i)$,}\\
0 & \text{else.}
\end{cases}
\end{align}
\end{definition}

\begin{definition} \label{def:AuxFDPSD}
Let $c \in\mathcal{C}$. For $i \in \{1, \ldots,n\}$ define 
\begin{align*} 
k_{i}(l)&= \min \{n,n+l-i,\left\lceil\frac{l}{\gamma}\right\rceil-1\}\qquad l=1, \ldots,\lfloor \gamma n \rfloor+1 \qquad \text{and}\\
N(i)&=N(\gamma,n,i)=\min \{ \lfloor \gamma n \rfloor +1,i, \left\lfloor \gamma \cdot \left( \frac{n-i}{1-\gamma}+1\right) \right\rfloor +1 \} ,\\
M(i)&=k_i(\{1, \ldots,\lfloor \gamma n \rfloor+1\}).
\end{align*}
\end{definition}

\begin{definition}[\FDPSD] \label{def:MatrixFDPSD}
Let $\gamma \in [0,1)$. Using the notation from definition \ref{def:AuxFDPSD} define 
\begin{itemize}
\item[(a)] $\widetilde{A} \in \reell^{n \times n}$ by
\begin{align*}
\widetilde{A}_{ij} &=i \cdot
\begin{cases}
 \left(\frac{1}{j}- \frac{1}{j+1}\right) & \qquad \text{for $1\le j < N(i)$},\\
\frac{1}{N(i)} & \qquad \text{for $j=N(i)$},\\
0 & \qquad \text{else}.
\end{cases}
\end{align*}
	\item[(b)] and
\begin{align*}
A^{\FDPSD}_{ij}(\gamma)&=
\begin{cases}
\sum_{l \in k_i^{-1}(\{j\})}\widetilde{A}_{il}& \qquad j \in M(i),\\
0 & \qquad \text{else.}
\end{cases}
\end{align*}
\end{itemize}
\end{definition}

%\begin{definition}[\FDRSU] \label{def:MatrixFDRSU}
%Define \footnote{ToDO}
%\begin{align*}
%A^{\FDRSU}_{ij}&= i \cdot
%\begin{cases}
%\end{cases}
%\end{align*}
%\end{definition}
%
%
%\begin{definition}[\FDRSD] \label{def:MatrixFDRSD}
%Define 
%\begin{align*}
%A^{\FDRSD}_{ij}&= i \cdot
%\begin{cases}
%\left(\frac{1}{j}- \frac{1}{j+1}\right) & \qquad \text{for $i=1, 1\le j < n$},\\
%\frac{1}{j}& \qquad \text{for $i=1, j= n$},\\
%\left(\frac{1}{j}- \frac{1}{j+1}\right) & \qquad \text{for $i>1, 1\le j \le n-i$},\\
%\left(\frac{1}{j}+ \frac{n-i}{j(j+1)}\right) & \qquad \text{for $i>1, j= n-i+1$},\\
%(n-i)\cdot\left(\frac{1}{j(j-1)}- \frac{1}{(j+1)j)}\right) & \qquad \text{for $i>1, n-i+1<j< n$},\\
%(n-i)\cdot\frac{1}{j(j-1)} & \qquad \text{for $i>1, j=n$}
%\end{cases}
%\end{align*}
%\end{definition}
\section{Main results} \label{sec:TheoreticalResults}
%\citet{RomanoWolf2007,RomanoWolf2010,RomanoShaikh2006,RomanoShaikh2006AOS,LehmannRomano2005}
%\subsection{A generic theorem}\label{ssec:genericTheorem}
First we state the main results of this paper which will serve as the starting point for mofifying some existing MTPs. As the proof in section \ref{sec:Proofs} shows, it is actually a rephrasing of results of \citet{RomanoShaikh2006,RomanoShaikh2006AOS} in terms of the associated matrices introduced in section \ref{sec:NotationAssumptions}.
\begin{thm}\label{theorem:MainTheorem}
Let $\alpha \in (0,1)$.
\begin{itemize}
\item[(a)] For $1\le k \le n$ it holds $\alpha \cdot\mathcal{F}(A^{\kFWERSU}(k)) \subset S^{\kFWERSU}(\alpha,k)$.
\item[(b)] For $1\le k \le n$ it holds $\alpha \cdot\mathcal{F}(A^{\kFWERSD}(k)) \subset S^{\kFWERSD}(\alpha,k)$.
\item[(c)] For $\gamma \in [0,1)$ it holds $\alpha \cdot\mathcal{F}(A^{\FDPSU}(\gamma)) \subset S^{\FDPSU}(\alpha,\gamma)$.
\item[(d)] For $\gamma \in [0,1)$ it holds $\alpha \cdot\mathcal{F}(A^{\FDPSD}(\gamma)) \subset S^{\FDPSD}(\alpha,\gamma)$.
\end{itemize}
\end{thm}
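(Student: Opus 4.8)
The plan is to reduce all four inclusions to a single, purely deterministic identity: for each associated matrix $A$ and each row index $i$, the coordinate $(A\cdot c)_i$ coincides exactly with the Romano--Shaikh/Lehmann--Romano upper bound on the relevant error rate under the hypothesis that the number $n_0:=|I|$ of true null hypotheses equals $i$. Granting this, the four statements follow at once. Indeed, if $c\in\alpha\cdot\mathcal{F}(A)$ then $c/\alpha\in\mathcal{F}(A)$, so by linearity $\|A\cdot c\|_\infty=\alpha\,\|A\cdot(c/\alpha)\|_\infty\le\alpha$; since the cited bounds depend on $I$ only through $n_0$, one obtains, for every configuration of true nulls, $(\text{error rate})\le\max_{n_0}(A\cdot c)_{n_0}=\|A\cdot c\|_\infty\le\alpha$, which is precisely the defining condition of the corresponding set $S^{\bullet}(\alpha,\cdot)$. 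Thus the whole theorem rests on correctly reading off the Romano--Shaikh bounds from the matrix entries.

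The second ingredient is the family of probabilistic inequalities themselves, which I would either quote from \citet{RomanoShaikh2006,RomanoShaikh2006AOS,LehmannRomano2005} or re-derive from the two elementary facts available under arbitrary dependence: the marginal validity $P(PV_i\le u)\le u$ and Markov's inequality applied to the number of true-null $p$-values below a threshold. The step-down $\kFWER$ case is the model computation: if at least $k$ true nulls are rejected, the $k$-th smallest true-null $p$-value sits at an ordered position at most $n-n_0+k$ (at most $n-n_0$ false nulls can precede it), so monotonicity of $c$ together with Markov gives $\kFWERSD\le\frac{n_0}{k}\,c_{n-n_0+k}$, which is exactly $(A^{\kFWERSD}(k)\cdot c)_{n_0}$ from Definition \ref{def:MatrixkFWERSD}. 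For the step-up direction and for both $\FDP$ criteria the single threshold is replaced by a decomposition over the ordered true-null $p$-values $PV_{(1:I)}\le\cdots\le PV_{(n_0:I)}$, each weighted by the telescoping increments $\frac{1}{l}-\frac{1}{l+1}$; the $l$-th such statistic can only be rejected at a position $\le l+(n-n_0)$, and this is what selects the critical index at which $c$ is to be evaluated.

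The bridge between these inequalities and the matrices is summation by parts. Writing $(A\cdot c)_i=\sum_j A_{ij}c_j$ and applying Abel summation, the differences $\frac1l-\frac1{l+1}$ collapse, so that a constant choice $c_j\equiv c$ reproduces the crude Markov bound $\frac{n_0}{k}c$, while a general non-decreasing $c$ reproduces the sharper order-statistic bound term by term; the boundary column $j=n$ with entry $i\cdot\frac1i=1$ supplies exactly the non-telescoping last summand. The genuine difficulty is the combinatorial bookkeeping for the $\FDP$ matrices, where the number of tolerated false rejections is not the fixed $k$ but the moving threshold $\lfloor\gamma R\rfloor+1$. This is the role of the auxiliary quantities in Definitions \ref{def:AuxFDPSU} and \ref{def:AuxFDPSD}: $m(j)=\lfloor\gamma j\rfloor+1$, the functions $g_i,t_k,\widetilde{M},M$ and, for the step-down direction, $N(i)$, $k_i$ and the pre-image sums $\sum_{l\in k_i^{-1}(\{j\})}\widetilde{A}_{il}$ encode, for each $n_0=i$, which order statistic governs the bound and at which critical index it must be read. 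I expect the main obstacle to be the $\FDPSD$ case (d): there the bound must first be expressed through the auxiliary matrix $\widetilde{A}$ and then re-indexed via $k_i^{-1}$, and one has to verify both that $k_i$ is the correct reparametrisation and that the telescoping is complete — with no omitted or double-counted terms — uniformly across all admissible values of $n_0$.
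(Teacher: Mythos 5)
Your proposal is correct and takes essentially the same approach as the paper: quote the Romano--Shaikh/Lehmann--Romano probabilistic bounds (which depend on the configuration of true nulls only through $|I|$), apply the generalised Bonferroni inequality of Lemma \ref{lemma:BasicLemma}, and verify by telescoping bookkeeping that the resulting bound equals $(A\cdot c)_{|I|}$, whence $\|A\cdot c\|_\infty\le\alpha$ yields control. The re-indexing via $k_i^{-1}$ and $\widetilde{A}$ that you single out as the main difficulty in case (d) is precisely what the paper's Proposition \ref{prop:FDP.SD.bound} and its subsequent corollaries carry out.
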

The theorem provides generic sufficient conditions for control of generalised error rates, i.e. if $d \in \mathcal{F}(A)$ then $\alpha \cdot d$ controls the corresponding error rate at the desired level. Since the sets $\mathcal{F}(A)$ from the theorem are convex, it follows immediately that for any matrix $A$ from theorem \ref{theorem:MainTheorem} and level $\alpha \in (0,1)$ the set of procedures $\alpha \cdot \mathcal{F}(A)$ is also convex. \citet{GuoHeSarkar2012} have introduced the $\gamma-\kFDP=P(\kFDP> \gamma)$ where $\kFDP=V/R$ ($V$ and $R$ defined as in the introduction) if $V \ge k$ and $0$ else. Under the assumption that $PV_i \sim U(0,1)$ under any true hypothesis $i$ they obtain linear bounds for the $\gamma-\kFDP$ in the proofs of their Theorems 4.1 and 4.2.  These bounds can again be used to define appropriate associated matrices and establish a result similar to the above theorem for the $\gamma-\kFDP$ but we do not pursue this any further here. 

One immediate consequence of the theorem is the following corollary.
\begin{coro} \label{coro:RescalingProcedures}
Let $1\le k \le n$, $\gamma \in [0,1)$. For $c \in \mathcal{C}$ and $x \in \{\kFWERSU, \kFWERSD, \FDPSU,\FDPSD\}$ define 
\begin{align*}
D^x(c) &= ||A^x \cdot c||_{\infty}.
\end{align*}
Then the rescaled procedure $\widetilde{c}:= \alpha \cdot c / D^x(c)$ yields control of the error rate $x$ at level $\alpha$.
\end{coro}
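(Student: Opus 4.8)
The plan is to derive the corollary directly from Theorem~\ref{theorem:MainTheorem} by showing that the rescaled constant vector $\widetilde{c}$ lands exactly in the set $\alpha \cdot \mathcal{F}(A^x)$, for which the theorem already guarantees control of error rate $x$ at level $\alpha$. First I would dispose of the trivial degenerate case: if $c$ is the zero vector (or more generally if $D^x(c)=0$, i.e. $A^x \cdot c$ vanishes), then no rescaling is meaningful and the procedure rejects nothing, so control holds vacuously; I would therefore assume $D^x(c)>0$ so that the division defining $\widetilde{c}$ makes sense.

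Next, the key computation is to verify the membership $\widetilde{c} \in \alpha \cdot \mathcal{F}(A^x)$. By definition of $\mathcal{F}$, this amounts to checking that $\widetilde{c}/\alpha \in \mathcal{F}(A^x)$, i.e. that $\|A^x \cdot (\widetilde{c}/\alpha)\|_{\infty} \le 1$. Substituting $\widetilde{c}/\alpha = c / D^x(c)$ and using homogeneity of the maximum norm under the nonnegative scalar $1/D^x(c)$, one gets
\begin{align*}
\left\|A^x \cdot \frac{c}{D^x(c)}\right\|_{\infty} &= \frac{1}{D^x(c)}\,\|A^x \cdot c\|_{\infty} = \frac{D^x(c)}{D^x(c)} = 1,
\end{align*}
where the middle equality is just the definition $D^x(c) = \|A^x \cdot c\|_{\infty}$. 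Hence $\widetilde{c}/\alpha$ satisfies the constraint with equality, so $\widetilde{c}/\alpha \in \mathcal{F}(A^x)$ and therefore $\widetilde{c} \in \alpha \cdot \mathcal{F}(A^x)$.

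Finally I would invoke the appropriate part of Theorem~\ref{theorem:MainTheorem} according to the value of $x$: part~(a) for $\kFWERSU$, (b) for $\kFWERSD$, (c) for $\FDPSU$, and (d) for $\FDPSD$. In each case the theorem gives $\alpha \cdot \mathcal{F}(A^x) \subset S^{x}(\alpha,\cdot)$, so the established membership $\widetilde{c} \in \alpha \cdot \mathcal{F}(A^x)$ yields $\widetilde{c} \in S^{x}(\alpha,\cdot)$, which by Definition~\ref{def:ControllingPorcedures} is precisely the assertion that $\widetilde{c}$ controls error rate $x$ at level $\alpha$. One small point I should also confirm is that $\widetilde{c}$ genuinely lies in $\mathcal{C}$, i.e. that its components remain non-decreasing and non-negative after rescaling; but this is immediate since multiplying the monotone nonnegative vector $c$ by the positive scalar $\alpha/D^x(c)$ preserves both properties. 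Given the homogeneity of the norm, there is no real obstacle here; the only thing to be careful about is the degenerate $D^x(c)=0$ case, which is why I single it out at the start.
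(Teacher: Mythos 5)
Your proposal is correct and takes essentially the same route as the paper, which presents this corollary as an immediate consequence of Theorem~\ref{theorem:MainTheorem}: one checks, exactly as you do via homogeneity of the maximum norm, that $c/D^x(c)$ satisfies $||A^x \cdot (c/D^x(c))||_{\infty}=1$ and hence lies in $\mathcal{F}(A^x)$, so that $\widetilde{c}=\alpha\cdot c/D^x(c) \in \alpha\cdot\mathcal{F}(A^x)$ and the appropriate part of the theorem applies. Your additional care regarding the degenerate case $D^x(c)=0$ and the verification that rescaling preserves membership in $\mathcal{C}$ goes slightly beyond the paper's implicit one-line argument but changes nothing essential.
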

Thus we can always achieve control of generalised error rates by using the rescaling approach.

The proof of the above theorem relies on two key tools. The first is the following generalised Bonferroni inequality due to \citet{LehmannRomano2005}.
\begin{lemma}[\citet{LehmannRomano2005}] \label{lemma:BasicLemma}
Let $X_1,\ldots,X_t: \Omega \rightarrow (0,1]$ be $p$-values that satisfy the above distributional assumption i.e. $P(X_i\le u)\le u$ for all $i$ and $u \in (0,1)$. Denote their ordered values by $X_{(1)}\le \cdots \le X_{(t)}$ and let $0=c_0 \le c_1 \le \cdots \le c_m\le 1$ for some $m \le t$. 
\begin{itemize}
	\item[(i)] Then it holds
\begin{align}
P(\{X_{(1)}\le c_1 \} \cup \cdots \cup  \{X_{(m)}\le c_m \} ) &\le t \cdot\sum_{i=1}^{m} \frac{c_i - c_{i-1}}{i}. \label{eq:GenBonf}
\end{align}
	\item[(ii)] As long as the right-hand side of \eqref{eq:GenBonf} is $\le 1$, the bound is sharp in the sense that there exists a joint distribution for the $p$-values for which the inequality is an equality.
\end{itemize}
\end{lemma}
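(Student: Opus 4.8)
The plan is to prove part (i) by a single application of Markov's inequality to a cleverly chosen additive statistic, and to prove the sharpness claim (ii) by an explicit comonotone-type construction.

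For (i), define the nonincreasing step weight $w:(0,1]\to\reell_{+}$ by $w(x)=1/l$ for $x\in(c_{l-1},c_l]$, $l=1,\dots,m$, and $w(x)=0$ for $x>c_m$, and set $Z=\sum_{k=1}^{t} w(X_k)$. I would establish two facts. First, on the event $A=\bigcup_{i=1}^m\{X_{(i)}\le c_i\}$ one has $Z\ge 1$: if $X_{(i)}\le c_i$ then at least $i$ of the $p$-values lie in $(0,c_i]$, and each such value sits in some $(c_{l-1},c_l]$ with $l\le i$, hence carries weight $w\ge 1/i$; summing gives $Z\ge i\cdot(1/i)=1$. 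Second, $\erw[Z]\le t\sum_{i=1}^m (c_i-c_{i-1})/i$. Combining these, $P(A)\le P(Z\ge 1)\le \erw[Z]$ by Markov's inequality, which is exactly the asserted bound \eqref{eq:GenBonf}.

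The expectation bound is where the distributional assumption enters. Since $w$ is nonincreasing and each $X_k$ is stochastically larger than a uniform variable $U\sim U(0,1)$ (because $P(X_k\le u)\le u$), I get $\erw[w(X_k)]\le \erw[w(U)]$; the elementary way to see this is to write $\erw[w(X_k)]=\int_0^\infty P(w(X_k)\ge s)\,ds$ and note that, $w$ being nonincreasing, $\{w(X_k)\ge s\}=\{X_k\le a_s\}$ for a threshold $a_s$, so $P(w(X_k)\ge s)\le a_s=P(w(U)\ge s)$. Summing over $k$ yields $\erw[Z]\le t\,\erw[w(U)]=t\int_0^1 w(u)\,du=t\sum_{i=1}^m (c_i-c_{i-1})/i$, the required inequality; this last identity is precisely the computation that produces the weights $1/i$.

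For the sharpness claim (ii), assuming the right-hand side $\beta:=t\sum_{i=1}^m(c_i-c_{i-1})/i\le 1$, I would exhibit a joint law attaining equality. Drive everything by a single $U\sim U(0,1)$ and partition $[0,\beta]$ into consecutive blocks of length $d_i:=t(c_i-c_{i-1})/i$. On the block associated with level $i$, force exactly $i$ of the $t$ coordinates to take a value in $(c_{i-1},c_i]$ and the remaining $t-i$ to exceed $c_m$, rotating cyclically which $i$ coordinates are active as $U$ traverses the block; for $U>\beta$ send all coordinates above $c_m$. On the level-$i$ block one checks that $X_{(i)}\le c_i$ while $X_{(j)}>c_j$ for all other $j\le m$, so $A$ occurs through index $i$ alone; the blocks are disjoint, giving $P(A)=\beta$, and moreover $Z\in\{0,1\}$, so both inequalities used in (i) hold with equality. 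The one point requiring care, and the main obstacle, is verifying that each marginal is exactly uniform: the cyclic rotation must be arranged so that each coordinate is active for total $U$-length $(i/t)\,d_i=c_i-c_{i-1}$ on level $i$, which telescopes to the uniform law and is exactly what the factor $t/i$ makes possible. Everything else is routine bookkeeping.
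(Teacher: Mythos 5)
The paper itself never proves this lemma --- it is imported as a known result from \citet{LehmannRomano2005}, and the paper's Section \ref{sec:Proofs} only uses it --- so the comparison must be made against that source. Your proof is correct, and for part (i) it is in essence the Lehmann--Romano argument: your statistic $Z=\sum_k w(X_k)$ is exactly their weighted count $\sum_{l=1}^m N_l/l$, where $N_l$ is the number of null $p$-values falling in $(c_{l-1},c_l]$, and the two steps ``$Z\ge 1$ on the union'' plus Markov's inequality are precisely their route. Where you genuinely deviate is in bounding $\erw[Z]$: the original proof expands $\sum_l \erw[N_l]/l$ and performs a summation by parts so that only terms $P(X_k\le c_l)\le c_l$ with nonnegative coefficients $(1/l-1/(l+1))$ appear, whereas you invoke stochastic dominance of $w(X_k)$ by $w(U)$ via the layer-cake formula. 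The computations are equivalent, but your phrasing makes transparent why one cannot simply bound $\erw[N_l]\le t(c_l-c_{l-1})$ term by term (that would require a lower bound on $P(X_k\le c_{l-1})$, which the hypotheses do not supply); this is a genuine, if small, conceptual clarification. For part (ii), the single-uniform, cyclic-rotation construction is likewise the standard one. The only point you dismiss as ``routine bookkeeping'' that actually needs a sentence is the placement of mass \emph{inside} each interval: arranging that coordinate $k$ receives total mass $c_i-c_{i-1}$ on $(c_{i-1},c_i]$ and mass $1-c_m$ above $c_m$ guarantees $P(X_k\le c_l)\le c_l$ only at the gridpoints $c_l$, while the hypothesis must hold at every $u\in(0,1)$. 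This is easily repaired --- either spread the mass uniformly over $(c_{i-1},c_i]$ by a measure-preserving map of the active sub-blocks (then the marginals are exactly uniform, as you claim), or place it as an atom at the right endpoint $c_i$ --- and with that addendum your verification $P(A)=\beta$ and hence equality in \eqref{eq:GenBonf} is complete.
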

Related inequalities have been obtained previously by \citet{Hommel1983}, \citet{RoehmelStreitberg1987} and \citet{Falk1989}.

The second step uses the observation that the generalised error rates considered here can all be bounded by probabilities of the type 
\begin{align}
P &\left( \bigcup_{i =1}^{M(|I|)} \{PV_{(i)} \le c_{t_i(|I|)}\} \right), \label{eq:GenProbBound}
\end{align}
where $|I|$ is the number of true hypotheses, $M(|I|)\in \{0,\ldots,n\}$ and $t_i(|I|) \in \{0,\ldots,n\}$ is an increasing sequence in $i$ (depending on $|I|$) and the $PV$ in \eqref{eq:GenProbBound} are taken under the null hypotheses. Then the probability in \eqref{eq:GenProbBound} can be bounded using lemma \ref{lemma:BasicLemma}. We call the resulting bound the LR-bound of the corresponding error rate.

For the procedures considered here, adjusted $p$-values can be defined in the generic way decribed in \citep[Procedures 1.3 and 1.4]{DudoitLaan2007}: For raw $p$-values $pv_1, \ldots, pv_n$ and $c=(c_1, \ldots,c_n) \in \mathcal{F}(A)$ define step-up $p$-values
\begin{align}
\widetilde{pv}_{(i)} &= \min_{j=i, \ldots,n} \left\{ \min \left( \frac{pv_{(j)}}{c_j},1\right)  \right\}
\intertext{and step-down $p$-values}
\widetilde{pv}_{(i)} &= \max_{j=1, \ldots,i} \left\{ \min \left( \frac{pv_{(j)}}{c_j},1 \right)  \right\} .
\end{align}

In what follows we will focus on $\FDP$ controlling procedures.

\section{Modified FDP-controlling procedures} \label{sec:NewFDPProcedures}
In addition to providing an easily verifiable condition for FDP controlling procedures, theorem \ref{theorem:MainTheorem} can be used to construct new or modify existing procedures. In this section we describe an approach based on linear programming. Our focus in this section is on improving classical procedures based on rescaled constants as considered in \citet{RomanoShaikh2006,RomanoShaikh2006AOS}. First we define new modified FDP procedures as the solutions of a linear programming problem.
\begin{definition}\label{def:NewGenericFDPProcedure}
Let $A \in \reell_{+}^{n \times n}$ and $c \in \mathcal{F}(A)$. Define the modified procedure $\xi=\xi(c) $ as the solution to the following linear programming problem (P):
\begin{align}
\text{maximize} \qquad & F(\xi)= a \cdot \xi   &&\notag\\
\text{subject to} \qquad & A_{i \cdot} \cdot \xi \le 1 &&\qquad i=1, \ldots, n \tag{P}\\
& -\xi_i+\xi_{i-1} \le 0  &&\qquad i=1, \ldots, n \notag\\ 
& -\xi_i+c_{i} \le 0  &&\qquad i=1, \ldots, n, \notag
\end{align}
where $\xi_0=0$ and $a_j=\sum_{i=1}^n A_{ij}$.
\end{definition}
Note that the third constraint in (P) implies that $\xi\ge c$ while the first and second constraints guarantee that $\xi \in \mathcal{F}(A)$. Note also that if $c=0$ then $\mathcal{F}(A)$ is identical with the feasible points of the optimisation problem, so that this approach could be used to find optimal solutions within the whole class $\mathcal{F}(A)$ instead of $\mathcal{F}(A) \cap \{\xi \ge c\}$. Since we are primarily interested in improving existing procedures we do not pursue this any further. 

For problems like (P), standard numerical methods like the simplex algorithm \citep{Dantzig63} are available. From a statistical viewpoint, it would be desirable to optimise the power of the MTP (defined in a suitable sense, see also section \ref{sec:SimulationStudy}), subject to the given constraints. The rationale for using the objective function $F$ is the following: Let $b_i=\sum_{j=1}^n A_{ij} \cdot \xi_j$, so that by Theorem \ref{theorem:MainTheorem} under $|I|=i$ the error rate is bounded by $b_i$ and the sum $b_1+ \cdots +b_n=F(\xi)$ can thus be interpreted as the sum of the maximum significance levels of the procedure. Since we are aiming for a powerful procedure it seems plausible to optimise this objective function in the sense that the best we can do without violating the bounds from lemma \ref{lemma:BasicLemma} is $F(\xi)=n$. Thus $F(\xi)$ may be thought of as a surrogate-measure of power. It can also be interpreted in a Bayesian framework by observing that optimising it is equivalent to optimising the mean maximum level of significance if the number of true hypotheses $|I|$ is distributed uniformly on $\{1, \ldots,n\}$. Thus, if prior knowledge is available for the distribution of $|I|$, we could also use the weighted objective function $F(\xi,w)=\sum_{i=1}^n w_i \cdot b_i$ where $w_i=P(|I|=i)$.

Using theorem \ref{theorem:MainTheorem} we immediately obtain the following result.
\begin{coro}
Let $\gamma \in [0,1)$, $A \in \{ A^{\FDPSU}(\gamma),A^{\FDPSD}(\gamma)  \}$ and $c \in \mathcal{F}(A)$. Let $\xi=\xi(c)$ as defined in definition \ref{def:NewGenericFDPProcedure}. Then $\xi \in \mathcal{F}(A)$ and therefore the procedure $\alpha \cdot \xi$ controls the FDP for any $\alpha\in (0,1)$. This procedure is at least as powerful as procedure $\alpha\cdot c$.  
\end{coro}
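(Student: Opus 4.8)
The plan is to read off all three assertions directly from the structure of the linear program (P) together with Theorem \ref{theorem:MainTheorem}, so the work is mostly bookkeeping. First I would identify the feasible region of (P). The monotonicity constraints $-\xi_i+\xi_{i-1}\le 0$ together with $\xi_0=0$ force $0\le\xi_1\le\cdots\le\xi_n$, so $\xi\in\mathcal{C}$; since $A\in\reell_{+}^{n\times n}$ and $\xi\ge 0$, every entry of $A\xi$ is nonnegative, whence the first constraints $A_{i\cdot}\cdot\xi\le 1$ are equivalent to $||A\cdot\xi||_{\infty}\le 1$, i.e. $\xi\in\mathcal{F}(A)$; finally the third constraints give $\xi\ge c$ componentwise. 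Hence the feasible set of (P) is exactly $\mathcal{F}(A)\cap\{\xi\ge c\}$.

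Next I would check that (P) actually has a solution. The point $c$ itself is feasible (it lies in $\mathcal{F}(A)$ by hypothesis, is nondecreasing, and satisfies $c\ge c$), so the feasible set is nonempty; and the objective is bounded above, because $F(\xi)=\sum_j a_j\xi_j=\sum_i(A\xi)_i=\sum_i A_{i\cdot}\cdot\xi\le n$ on the feasible set. A linear program with nonempty feasible polyhedron and objective bounded above attains its maximum, so $\xi=\xi(c)$ exists and, being feasible, satisfies $\xi\in\mathcal{F}(A)$; this is the first assertion. The second assertion is then immediate: for $A\in\{A^{\FDPSU}(\gamma),A^{\FDPSD}(\gamma)\}$ we have $\alpha\xi\in\alpha\cdot\mathcal{F}(A)$, which by parts (c) and (d) of Theorem \ref{theorem:MainTheorem} is contained in $S^{\FDPSU}(\alpha,\gamma)$ resp. $S^{\FDPSD}(\alpha,\gamma)$, so $\alpha\xi$ controls the FDP at level $\alpha$.

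For the power comparison I would use the third constraint, which gives $\xi\ge c$ and hence $\alpha\xi\ge\alpha c$ componentwise. The key point is pathwise monotonicity of the rejection region in the critical constants. For the step-up rule $k=\max\{i\mid PV_{(i)}\le \alpha c_i\}$, enlarging each $\alpha c_i$ to $\alpha\xi_i$ enlarges every event $\{PV_{(i)}\le\alpha c_i\}\subseteq\{PV_{(i)}\le\alpha\xi_i\}$, so the rejection index $k$ can only grow; the same holds for the step-down rule $k=\max\{i\mid PV_{(j)}\le\alpha c_j,\ j\le i\}$, since each defining event is enlarged. Thus on every sample path the set of hypotheses rejected under $\alpha c$ is contained in that rejected under $\alpha\xi$, and in particular the number of correctly rejected (false) hypotheses under $\alpha\xi$ dominates that under $\alpha c$ pathwise. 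Consequently $\alpha\xi$ is at least as powerful as $\alpha c$ under any monotone notion of power, for example the expected number of true rejections or the average power.

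The only genuinely delicate step is the last one: it requires fixing a definition of ``at least as powerful'' and verifying the pathwise domination of the rejection sets. Once the step-up and step-down definitions from Section \ref{sec:NotationAssumptions} are invoked this is routine, and everything else follows immediately from the constraints of (P) and from Theorem \ref{theorem:MainTheorem}, which is why the result can be stated as an immediate corollary.
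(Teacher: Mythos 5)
Your proposal is correct and follows exactly the argument the paper intends: the paper states this corollary as an immediate consequence of Theorem \ref{theorem:MainTheorem}, and your verification (feasible set of (P) equals $\mathcal{F}(A)\cap\{\xi\ge c\}$, existence of an optimum since $c$ is feasible and $F\le n$, control via parts (c) and (d) of the theorem, and pathwise monotonicity of the step-up and step-down rejection sets in the critical constants) is precisely the bookkeeping that claim suppresses. Nothing is missing; your treatment of the power comparison is in fact more careful than the paper's, which leaves the monotonicity argument implicit.
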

Clearly, if $F(\xi) > F(c)$, then $\xi>c$. This means that this approach will always find a strict improvement over $c$ whenever one exists and we may thus expect a gain in power. Since, by construction, $\xi$ can not be improved uniformly within class $\mathcal{F}(A)$, $\alpha\cdot \xi$ can be seen as an optimal procedure within the subset $\alpha \cdot \mathcal{F}(A)$ of all $\alpha$-controlling FDP procedures. 

We now consider two specific types of critical constants in more detail.
\begin{itemize}
	\item[(a)] The Benjamini-Hochberg constants:
	\begin{align*}
	c_i^{BH}&=c_i^{BH}(n)=\frac{i}{n} 
	\end{align*}
	\item[(b)] The Lehmann-Romano constants:
	\begin{align*}
	c_i^{LR}&=c_i^{LR}(\gamma,n)=\frac{\lfloor \gamma i \rfloor +1}{n + \lfloor \gamma i \rfloor +1 -i} 
	\end{align*}
\end{itemize}
%\begin{definition}\label{def:critical.constants}
%For $n \in \nat$, $1\le i,k \le n$, $\gamma \in [0,1)$, $q \in (0,1)$ define the following sets of critical constants:
%\begin{itemize}
%	\item[(a)] The Benjamini-Hochberg constants:
%	\begin{align*}
%	c_i^{BH}&=c_i^{BH}(n)=\frac{i}{n} 
%	\end{align*}
%	\item[(b)] The Lehmann-Romano constants:
%	\begin{align*}
%	c_i^{LR}&=c_i^{LR}(k,n)=
%	\begin{cases}
%	\frac{k}{n} & \qquad i\le k\\
%	\frac{k}{n+k-i} & \qquad i>k.
%	\end{cases} 
%	\end{align*}
%	\item[(c)] The Romano-Shaikh constants:
%	\begin{align*}
%	c_i^{RS}&=c_i^{RS}(\gamma,n)=\frac{\lfloor \gamma i \rfloor +1}{n + \lfloor \gamma i \rfloor +1 -i} 
%	\end{align*}
%	\item[(d)] The Benajmini-Liu constants:
%	\begin{align*}
%	c_i^{BL}&=c_i^{BL}(q,n)=1- \left[ 1 - \min \left( 1, \frac{n}{n-i+1}\cdot q\right)\right]^{1/(n-i+1)}
%	\end{align*}
%\end{itemize}
%\end{definition}  
In \citet{RomanoShaikh2006,RomanoShaikh2006AOS} normalising constants were introduced for $c^{BH}$ and $c^{RS}$ for step-up and step-down procedures. These constants were defined (in our notation) by
\begin{align*}
D^{BH-SU}(\gamma) &= ||A^{\FDPSU}(\gamma) \cdot c^{BH}||_{\infty},\\
D^{RS-SU}(\gamma) &= ||A^{\FDPSU}(\gamma) \cdot c^{RS}(\gamma)||_{\infty},\\
D^{BH-SD}(\gamma) &= ||A^{\FDPSD}(\gamma) \cdot c^{BH}||_{\infty},\\
D^{RS-SD}(\gamma) &= ||A^{\FDPSD}(\gamma) \cdot c^{RS}(\gamma)||_{\infty},
\end{align*}
and due to corollary \ref{coro:RescalingProcedures} the rescaled procedures $\alpha \cdot c/D(\gamma)$ all control the $\gamma$-FDP at level $\alpha$.  
\subsection*{Example} Figure \ref{fig:Explanationl150} illustrates the possible gains resulting from the optimisation approach for $n=50$ and $\gamma=0.05$.
\begin{sidewaysfigure}[htbp]
	\centering
		\includegraphics[width=1.00\textwidth]{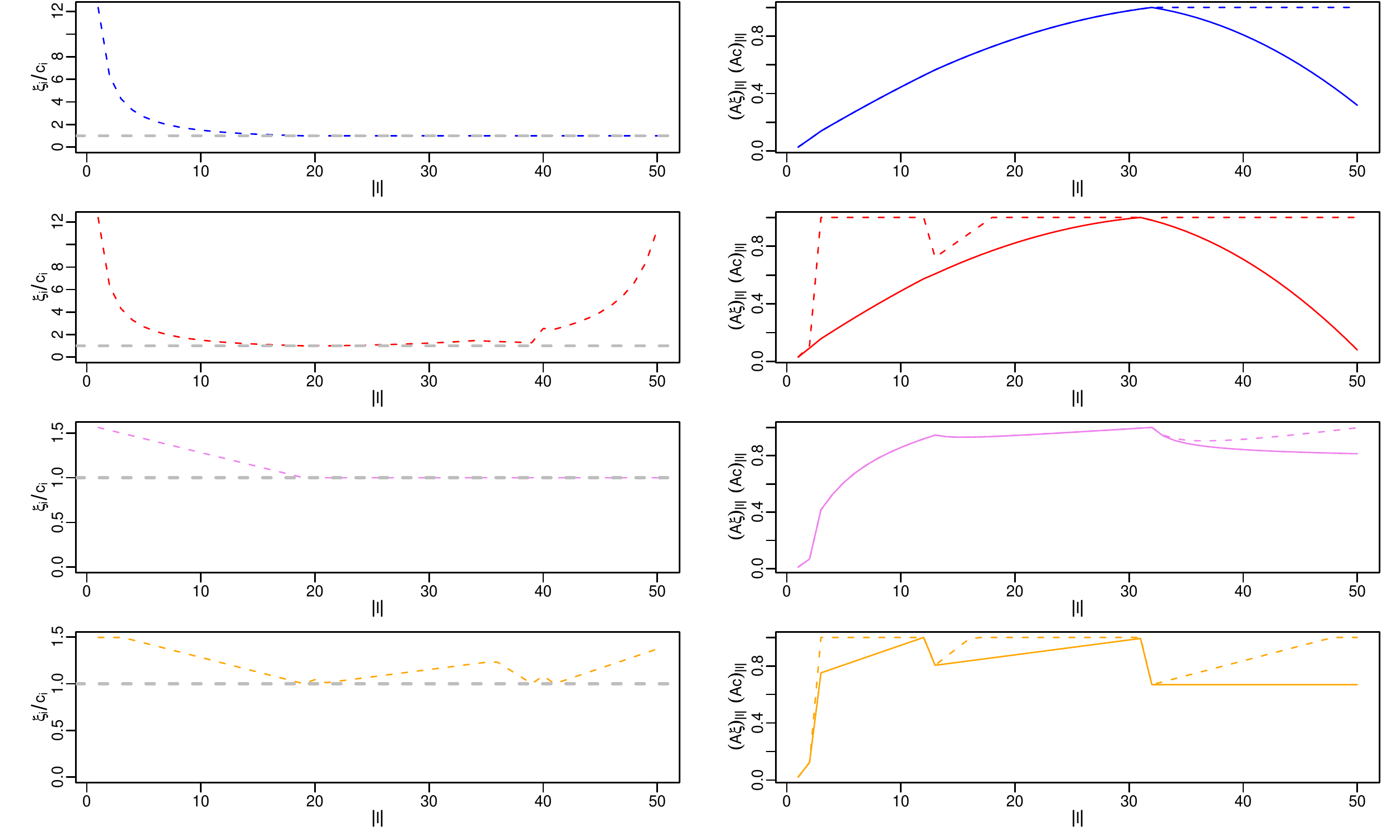}
	\caption{Illustration for $\FDP$ with $n=50$ and $\gamma=0.05$. Left panels: Ratios of modified critical constants $\xi$ to original rescaled constants $c$. Right panels: Values of $(A \cdot \xi)_{|I|}$ (dashed lines) and $(A \cdot c)_{|I|}$ (solid lines). The procedures BH-SU, BH-SD, RS-SU and RS-SD are sorted from top to bottom.}
	\label{fig:Explanationl150}
\end{sidewaysfigure} 
%\begin{figure}[htbp]
%	\centering
%		%\includegraphics[width=1.00\textwidth]{DistrTMIN.pdf}
%		\includegraphics[width=1.00\textwidth]{InvolvedCritConstants.pdf}
%		%\includegraphics{DistrTMIN.pdf}
%	\caption{Illustration for $\FDP$ with $n=50$ and $\gamma=0.05$. Left panels: Ratios of modified critical constants $\xi$ to original rescaled constants $c$. Right panels: Values of $(A \cdot \xi)_{|I|}$ (dashed lines) and $(A \cdot c)_{|I|}$ (solid lines). The procedures BH-SU, BH-SD, RS-SU and RS-SD are sorted from top to bottom.}
%	\label{fig:InvolvedCritConstants}
%\end{figure} 
In all cases the modified procedures are strictly better than the rescaled procedures. To investigate where the gains come from, we consider the BH-SU procedure in more detail. For the rescaled procedure $c=c^{BH}/D^{BH-SU}(\gamma)$, $(A^{\FDPSU}(0.05)\cdot c)_{|I|}=1$ for $|I|=32$. The column entries for row 32 of matrix $A^{\FDPSU}(0.05)$ are strictly greater than zero for columns $19$ to $50$ and therefore the associated critical constants $c_{19}, \ldots,c_{50}$ can not be improved upon (any increase would violate the constraint $\max (A \cdot c)_{|I|}\le 1$). However, since $A_{32,1}=\cdots =A_{32,18}=0$ there is some potential for increasing the remaining critical constants $c_{1}, \ldots,c_{18}$. This is exactly what the optimisation in the linear program (P) accomplishes. Ideally, this would result in a new procedure $\xi$ with $(A \cdot \xi)_{|I|}= 1$ for all $|I|$, yielding a completely unimprovable procedure within class $\mathcal{F}(A)$. This happens e.g. for $A=A^{\kFWERSD}$, when $\xi$ is the vector of Lehmann-Romano constants, see section \ref{ssec:ProofMainTheoremb}. However, due to the structure of the matrix $A$, this is usually impossible. In the case of BH-SD we obtain $A\cdot \xi_{32}= \cdots =A\cdot \xi_{50}=1$ (see uppermost right panel in figure \ref{fig:Explanationl150}).

Figure \ref{fig:Explanationl150} suggests that the gains derived from the modifications are considerably larger for the BH than for the RS procedures. This is also supported by the numerical values in table \ref{tab:CompareProcedures}. If we follow the arguments given above for justifying the choice of objective function we would expect the modified BH-SD procedure to be the most powerful procedure (indicated by the highest values of $F(\xi)$), followed closely by the modified RS-SD procedure. This is also consistent with the simulation results in section \ref{sec:SimulationStudy}. 

\begin{sidewaystable}[htb]
	\centering
% Table generated by Excel2LaTeX from sheet 'Tabelle1'
\begin{tabular}{rrrrrrrrrrrrrrrrr}
\toprule
           &                                                                               \multicolumn{ 8}{c}{SU} &                                                                               \multicolumn{ 8}{c}{SD} \\ \cmidrule(lr){2-9}\cmidrule(lr){10-17} 

           &                           \multicolumn{ 4}{c}{BH} &                           \multicolumn{ 4}{c}{RS} &                           \multicolumn{ 4}{c}{BH} &                           \multicolumn{ 4}{c}{RS} \\ \cmidrule(lr){2-5}\cmidrule(lr){6-9}\cmidrule(lr){10-13}\cmidrule(lr){14-17}
         $n$ &        $F(c)$ &       $F(\xi)$ &   $M_1$ & $M_2$ &         $F(c)$ &       $F(\xi)$ &   $M_1$ & $M_2$ &      $F(c)$ &       $F(\xi)$ &   $M_1$ & $M_2$ &    $F(c)$ &       $F(\xi)$ &   $M_1$ & $M_2$   \\ \hline
        10 &       7.75 &       8.16 &       2.61 &       1.34 &      8.76 &       8.76 &       1.00 &       1.00 &       7.33 &       10.00 &       3.00 &       3.00 &      10.00 &       10.00 &       1.00 &       1.00 \\

        25 &       18.32 &       20.39 &     6.42 &       2.09 &       21.32 &       22.75 &       1.23 &       1.11 &       17.18 &      24.14 &       6.76 &       6.76 &       17.90 &       23.50 &       1.43 &       1.43 \\

        50 &       32.78 &       37.90 &    12.36 &       3.14 &       41.75 &       43.39 &       1.56 &       1.23 &       31.55 &       48.17 &      12.4 &       12.4 &       38.69 &       44.94 &       1.50 &       1.50 \\

       100 &       66.97 &       74.02 &    18.39 &       3.92 &       83.63 &       85.47 &       1.29 &       1.09 &       65.24 &      94.89 &      18.39 &       18.39 &       77.47 &      87.01 &       1.57 &       1.51 \\

       250 &       165.51 &       173.72 &  19.00 &       3.59 &       207.72 &       209.11 &       1.11 &       1.03 &       164.27 &      230.50 &      24.41 &       19.00 &      196.77 &      219.11 &       1.82 &       1.71 \\

       500 &       328.09 &       336.90 &  19.00 &       3.34 &      411.57 &      412.68 &       1.05 &       1.02 &       328.13 &      459.61 &      31.05 &       19.00 &      392.67 &      444.89 &       2.15 &       1.96 \\

      1000 &    650.00    &       659.18 &  19.00  &    3.13        &   812.64   &    813.49       &  1.03          &    1.01        &    653.11       &   921.70        &   39.52         &    19.00        &   778.33         &   902.52         &    2.49        &      2.22      \\
\bottomrule
\end{tabular}  
\caption{Values of $F$ for rescaled and modified critical constants and maximum ratios $M_1=\max \xi_i/c_i$, $M_2=\max (A \cdot \xi)_{|I|}/(A \cdot c)_{|I|}$ for $\FDP$ procedures with $\gamma=0.05$.}
\label{tab:CompareProcedures}
\end{sidewaystable}
\clearpage

%\section{Introduction}
\section{Proofs}\label{sec:Proofs}
In this section we prove the statements of the theorem. Actually, the main work is to rephrase the results of \citet{RomanoShaikh2006,RomanoShaikh2006AOS} in terms of the matrices introduced in section \ref{sec:NotationAssumptions}. The structure of the proofs is the same in all cases.
\subsection{Proof of theorem \ref{theorem:MainTheorem}, part (a)}
\begin{proof}
Let $d \in \mathcal{F}(A^{\kFWERSU}(k))$, define $c=\alpha \cdot d$ and let $I \subset \{1,\ldots,n\}$ be the set of true hypotheses. By \citet[lemma 3.1]{RomanoShaikh2006AOS} we have 
\begin{align*}
\kFWER(c) &\le P\left( \bigcup_{k\le \ell\le |I|} \{PV_{(\ell)} \le c_{n-|I|+\ell}\} \right)
\end{align*}
where $PV_{(1)},\ldots,PV_{(|I|)}$ are the $p$-values corresponding to the null hypotheses. By lemma \ref{lemma:BasicLemma} with $t=|I|=m$ and $\tilde{c}_0=\ldots=\tilde{c}_{k-1}=0,\tilde{c}_k=c_{n-|I|+k},\ldots,\tilde{c}_{|I|}=c_n$ this probability can be bounded by
\begin{align}
&|I| \cdot \left\{\frac{c_{n-|I|+k}}{k} + \frac{c_{n-|I|+k+1} -c_{n-|I|+k}}{k+1} + \cdots + \frac{c_{n} -c_{n-1}}{|I|} \right\} \notag\\
&= |I| \cdot \left\{c_{n-|I|+k}\cdot \left(\frac{1}{k}-\frac{1}{k+1} \right) + \cdots + c_{n-1}\cdot \left(\frac{1}{|I|-1}-\frac{1}{|I|}\right) + c_n \cdot \frac{1}{|I|} \right\} \label{eq:kFWERMatrix1}\\ 
&= \sum_{j=1}^n A_{|I|j}\cdot c_j, \label{eq:kFWERMatrix2}\\
&= \alpha \cdot (A\cdot d)_{|I|}\notag
\end{align}
where $A=A^{\kFWERSU}(k)$. Equality \eqref{eq:kFWERMatrix2} can be verified by considering the four cases in definition \ref{def:MatrixkFWERSU} separately:
\begin{itemize}
	\item For $|I|<k$ definition \ref{def:MatrixkFWERSU} yields $A_{|I|1}=\cdots=A_{|I|n}=0$ so that the bound in \eqref{eq:kFWERMatrix2} is equal to 0 which is correct, since $\kFWER(c)=0$ for $|I|<k$.
	\item For $|I|\ge k$ and $j<n+k-|I|$ the coefficient of $c_j $ is easily seen to equal 0 from equation \eqref{eq:kFWERMatrix1}.
	\item For $|I|\ge k$ note that the sum \eqref{eq:kFWERMatrix2} can be reexpressed as $\sum_{p=0}^{|I|-k} c_{n-|I|+k+p} \cdot A_{|I|,n-|I|+k+p}$. For $n+k-|I|\le j <n$ ($\Leftrightarrow$ $0\le p <|I|-k$) the coefficient of $c_{n-|I|+k+p}$ is $A_{|I|,n-|I|+k+p}=|I| \cdot (1/(k+p)-1/(k+p+1)$. Since $k+p=j+|I|-n$ the claim follows from the third part of the definition of $A$.
	\item For $|I|\ge k$ and $l=n$ the coefficient equals 1 as seen from equation \eqref{eq:kFWERMatrix1}.
\end{itemize}
Since $d \in \mathcal{F}(A^{\kFWERSU}(k))$, it follows
\begin{align*}
\max_{I \subset \{1,\ldots,n\}} \kFWER(c) &\le \alpha \cdot \max_{I \subset \{1,\ldots,n\}} (A \cdot d)_{|I|} \\
&\le \alpha
\end{align*}
\end{proof}
\subsection{Proof of theorem \ref{theorem:MainTheorem}, part (b)} \label{ssec:ProofMainTheoremb}
\begin{proof}
To prove that $\alpha \cdot \mathcal{F}(A^{\kFWERSD}(k)) \subset S^{\kFWERSD}(\alpha,k)$ let $d \in \mathcal{F}(A^{\kFWERSD}(k))$, define $c=\alpha \cdot d$ and let $I \subset \{1,\ldots,n\}$ be the set of true hypotheses. From the proof of Theorem 2.2 in \citet{LehmannRomano2005} it follows that
\begin{align*}
\kFWER(c) &\le P(PV_{(k)} \le c_{n-|I|+k})
\end{align*}
and by lemma \ref{lemma:BasicLemma} with $t=|I|$, $m=k$ and $0=\tilde{c}_0=\cdots=\tilde{c}_{m-1},\tilde{c}_m=c_{n-|I|+k}$ this probability can be bounded by
\begin{align}
\frac{|I|}{k}\cdot c_{n-|I|+k} &= \alpha \cdot (A\cdot d)_{|I|}\notag
\end{align}
where $A=A^{\kFWERSD}(k)$. 

To prove that $ S^{\kFWERSD}(\alpha,k)  \subset\alpha \cdot \mathcal{F}(A^{\kFWERSD}(k)$ we use the optimality property of the Lehmann-Romano procedure. Let $c \in S^{\kFWERSD}(\alpha,k)$. By Theorem 2.3 (ii) in \citet{LehmannRomano2005} it follows that for $i\ge k$ $c_i \le \alpha \cdot d^{LR}_i$ where $d^{LR}_i=k/(n+k-i)$ are the Lehmann-Romano critical constants. Now let $|I|\in \{1, \ldots,n\}$. Then it follows
\begin{align*}
(A \cdot d^{LR})_{|I|} &= A_{|I|,n-|I|+k} \cdot d^{LR}_{n-|I|+k}= \frac{|I|}{k} \cdot \frac{k}{n+k-(n-|I|+k)}=1
\end{align*}
so that $d^{LR} \in \mathcal{F}(A^{\kFWERSD}(k))$ and the claim is proved.
\end{proof}
\subsection{Proof of theorem \ref{theorem:MainTheorem}, part (c)}The following lemma is a re-phrasing of Lemma 4.1 in \citet{RomanoShaikh2006AOS} and states that the event $\{\FDR> \gamma\}$ is a subset of the union of sets of the type $\{PV_{(j)} \le c_{i_j}\}$. 
\begin{lemma}\label{lemma:FDPRepresentation}
Let the notation from definition \ref{def:AuxFDPSU} be given. Consider testing $n$ null hypotheses, with $|I|\ge 1$ of them true. Let $PV_{(1)},\ldots,PV_{(|I|)}$ denote the sorted $p$-values under the null hypotheses and let $\gamma \in [0,1)$. Then it holds for the step-up procedure based on the constants $c_1 \le \cdots \le c_n\le 1$
\begin{align*}
\{\FDP(c) > \gamma\} &\subset  \bigcup_{k =1}^{M(|I|)} \{PV_{(k)} \le c_{t_k (|I|)}\}
\end{align*} 
\end{lemma}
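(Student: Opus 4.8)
The plan is to partition the event $\{\FDP(c)>\gamma\}$ according to the number $R$ of rejections made by the step-up procedure, and on each piece $\{R=l\}$ to bound the number $V$ of false rejections from below, converting this into a statement about one of the ordered null $p$-values. Throughout write $i=|I|$. On $\{\FDP(c)>\gamma\}$ we have $R\ge 1$, and since $V$ and $R$ are integers, the inequality $V/R>\gamma$ is equivalent to $V\ge\lfloor\gamma R\rfloor+1=m(R)$.

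First I would establish the combinatorial lower bound on $V$. On $\{R=l\}$ two inequalities hold at once: $V\ge m(l)$ (from $\FDP>\gamma$, as above) and $V\ge l-(n-i)=i-n+l$, the second because at most $n-i$ of the $l$ rejections can fall on the $n-i$ false null hypotheses. Hence $V\ge\max\{i-n+l,m(l)\}=g_i(l)$. Moreover $V\le i$ forces $m(l)\le i$, i.e. $l\le\widetilde{M}(i)$, so only the indices $l\in\{1,\dots,\widetilde{M}(i)\}$ can contribute.

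Second I would convert this into a $p$-value inequality. Because the step-up procedure rejects exactly the hypotheses carrying the $R=l$ smallest of all $p$-values, the rejected true-null $p$-values are precisely the $V$ smallest among the null $p$-values, namely $PV_{(1)},\dots,PV_{(V)}$, and each is bounded by the rejection threshold $c_l$; in particular $PV_{(V)}\le c_l$. Setting $k=g_i(l)$, monotonicity of the ordered $p$-values together with $V\ge k$ gives $PV_{(k)}\le PV_{(V)}\le c_l$.

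Finally I would reindex to match the asserted union. The map $g_i$ is non-decreasing with increments at most $1$ (as $l\mapsto i-n+l$ steps by $1$ and $m$ steps by $0$ or $1$), and $g_i(1)=\max\{i-n+1,m(1)\}=1$ while $g_i(\widetilde{M}(i))=M(i)$; hence $k=g_i(l)$ ranges over all of $\{1,\dots,M(i)\}$ and $t_k(i)=\max g_i^{-1}(\{k\})\ge l$ is well defined. Since $c$ is non-decreasing, $c_l\le c_{t_k(i)}$, so $\{R=l\}\cap\{\FDP(c)>\gamma\}\subset\{PV_{(k)}\le c_{t_k(i)}\}$ with $k=g_i(l)\in\{1,\dots,M(i)\}$, and taking the union over the admissible $l$ gives the claim. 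I expect the main obstacle to be the bookkeeping in this last step: checking that $g_i$ attains every value in $\{1,\dots,M(i)\}$ and that enlarging the threshold from $c_l$ to $c_{t_k(i)}$ is exactly what lets the finitely many events $\{PV_{(k)}\le c_{t_k(i)}\}$, $k=1,\dots,M(i)$, cover every choice of $l$. The probabilistic and combinatorial content — the bound $V\ge g_i(l)$ and the threshold inequality $PV_{(V)}\le c_l$ — is comparatively direct.
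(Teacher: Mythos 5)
Your proof is correct, and its second half --- the reindexing via $g_{|I|}$ (non-decreasing, increments at most one, $g_{|I|}(1)=1$, hence mapping $\{1,\dots,\widetilde{M}(|I|)\}$ onto $\{1,\dots,M(|I|)\}$) together with the enlargement of the threshold from $c_\ell$ to $c_{t_k(|I|)}$ using $\ell \le t_k(|I|)$ --- is exactly the argument given in the paper. The difference is in the first half: the paper does not prove the intermediate covering
\begin{align*}
\{\FDP(c) > \gamma\} &\subset \bigcup_{\ell :\, m(\ell) \le |I|} \{PV_{(g_{|I|}(\ell))} \le c_{\ell}\},
\end{align*}
it cites it from \citet{RomanoShaikh2006AOS} (the display at the bottom of p.~1861 there), consistent with the paper's stated aim of merely rephrasing those results in matrix form. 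You instead derive this covering from first principles: partitioning on $\{R=\ell\}$, using integrality of $V$ and $R$ to get $V \ge m(\ell)$, counting the at most $n-|I|$ false hypotheses to get $V \ge |I|-n+\ell$, using $V\le |I|$ to restrict to $\ell \le \widetilde{M}(|I|)$, and invoking the step-up threshold property to get $PV_{(V)} \le c_\ell$. What your route buys is self-containedness: the lemma can be checked without consulting the reference; what the paper's route buys is brevity and transparency about provenance. One small caveat in your write-up: the claim that the rejected true-null $p$-values are \emph{precisely} the $V$ smallest null $p$-values is stronger than needed and can be ambiguous under ties; all your argument requires is that the $V$ rejected true nulls each have $p$-value at most $c_\ell$, so at least $V$ of the null $p$-values are $\le c_\ell$ and hence $PV_{(V)} \le c_\ell$, which is exactly how you then use it.
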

\begin{proof}
We use the bound 
\begin{align*}
\{\FDP > \gamma\} &\subset  \bigcup_{0 \le j \le n-1, |I|\ge m(n-j)} \{PV_{(\max[(|I|-j),m(n-j)])} \le c_{n-j}\}
\end{align*} 
given at the bottom of p. 1861 in \citet{RomanoShaikh2006AOS}. With $\ell=n-j$ the index set is now $\ell=1, \ldots,n$ with $m(\ell)\le |I|$ and so we have 
\begin{align*}
\{\FDP > \gamma\} &\subset  \bigcup_{\ell=1}^{\widetilde{M}(|I|)} \{PV_{(max[(|I|-n+\ell),m(\ell)])} \le c_{\ell}\}\\
&=\bigcup_{\ell=1}^{\widetilde{M}(|I|)} \{PV_{(g_{|I|}(\ell))} \le c_{\ell}\}
\end{align*} 
where the last equality follows from the definition of $g_{|I|}$ (see definition \ref{def:AuxFDPSU}), defined on $\{1, \ldots, \widetilde{M}(|I|)\}$. Clearly, $g_{|I|}$ is non-decreasing. Since $g_{|I|}(\ell+1)-g_{|I|}(\ell) \le 1$ and $g_{|I|}(1)=1$ it follows that $g_{|I|}(\ell) \le \ell$ and from the definition of $M_{|I|}$ that $g_{|I|}(\{1, \ldots, \widetilde{M}(|I|)\})=\{1, \ldots, M(|I|)\}$. 

For $k \in \{1, \ldots, M(|I|)\}$ we now claim that $\{PV_{(g_{|I|}(\ell))} \le c_{\ell}\} \subset \{PV_{(k)} \le c_{t_k (|I|)} \}$ for any $\ell \in g_{|I|}^{-1}(\{k\})$. To see this, let $\ell \in g_{|I|}^{-1}(\{k\})$. By the definition of $t_k$ it follows $\ell \le t_k(|I|)$. We thus obtain 
\begin{align*}
\{PV_{(g_{|I|}(\ell))} \le c_{\ell} \} &= \{PV_{(k)} \le c_{\ell}\} \qquad \text{(since $g_{|I|}(\ell)=k$)}\\
&\subset \{ PV_{(k)} \le c_{t_k (|I|)} \},
\end{align*}
since $\ell \le t_k (|I|)$. Altogether this yields
\begin{align*}
\{\FDP > \gamma\} &\subset  \bigcup_{\ell=1}^{\widetilde{M}(|I|)} \{PV_{(g_{|I|}(\ell))} \le c_{\ell}\}\\
&\subset \bigcup_{k=1}^{M(|I|)}\{PV_{(k)} \le c_{t_k (|I|)}\}.
\end{align*} 
\end{proof}
%\begin{proof}
%Let $c \in S^{\FDPSU}(\alpha,\gamma)$ and define
%\begin{align*}
%c_0 &= \frac{c}{\alpha}\\
%D&= \max_{I \subset \{1, \ldots, n\}} (A \cdot c_0)_{|I|}\\
%c_0^{''}&=\frac{c_0}{D}.
%\end{align*}
%Then it holds that $c_0^{''} \in \mathcal{F}(A)$, $\max_{I \subset \{1, \ldots, n\}} (A \cdot c_0^{''})_{|I|}=1$, and $\alpha \cdot c_0^{''}, \alpha \cdot c_0 \in S^{\FDPSU}(\alpha,\gamma)$. Since $c_0$ and $c_0^{''}$ differ only by a multiplicative constant, therorem 4.1(ii) in \citet{RomanoShaikh2006AOS} yields $c_0^{''}\ge c_0$ from which $c_0 \in \mathcal{F}(A)$ and thus $c=\alpha\cdot c_0 \in \alpha \cdot \mathcal{F}(A)$ follow.
%\end{proof}
\begin{proof}[Proof of theorem \ref{theorem:MainTheorem}, part (c)] 
Let $d \in \mathcal{F}(A^{\FDPSU}(\gamma))$, define $c=\alpha \cdot d$ and let $I \subset \{1,\ldots,n\}$ be the set of true hypotheses. Be lemma \ref{lemma:FDPRepresentation} we have
\begin{align*}
\{\FDP(c) > \gamma\} &\subset  \bigcup_{k =1}^{M(|I|)} \{PV_{(k)} \le c_{t_k (|I|)}\}
\end{align*} 
and by lemma \ref{lemma:BasicLemma} this probability can be bounded by
\begin{align}
&|I| \cdot \left\{\frac{c_{t_1 (|I|)}}{1} + \frac{c_{t_2 (|I|)} -c_{t_1 (|I|)}}{2} + \cdots + \frac{c_{t_{M(|I|)} (|I|)}-c_{t_{M(|I|)-1} (|I|)}}{M(|I|)} \right\} \notag\\
&= |I| \cdot \left\{ c_{t_1 (|I|)}\cdot \left(1-\frac{1}{2} \right) + \cdots + c_{t_{M(|I|)-1}(|I|)}\cdot \left(\frac{1}{M(|I|)-1}-\frac{1}{M(|I|)}\right) + \frac{c_{t_{M(|I|)} (|I|)}}{M(|I|)} \right\} \label{eq:FDPSUMatrix1}\\ 
&= \sum_{j=1}^n A_{|I|j}\cdot c_j, \label{eq:FDPSUMatrix2}\\
&= \alpha \cdot (A\cdot d)_{|I|}\notag
\end{align}
where $A=A^{\FDPSU}(\gamma)$. Equality \eqref{eq:FDPSUMatrix2} can be verified by considering the following two cases:
\begin{itemize}
	\item If $M(|I|)=1$, the above upper bound equals $|I|\cdot c_{t_1 (|I|)}$ which is identical with \eqref{eq:FDPSUMatrix2} due to the second case in definition \ref{def:MatrixFDPSU}.
	\item If $M(|I|)>1$, the sum in \eqref{eq:FDPSUMatrix2} consists only of terms with $j=t_1 (|I|),t_2 (|I|), \ldots, t_{M(|I|)} (|I|)$ and the non-zero entries of row $|I|$ of $A$ are exactly the coefficents of $c_{t_1 (|I|)}, \ldots , c_{t_{M(|I|)} (|I|)}$ in \eqref{eq:FDPSUMatrix1}, corresponding to the first case in definition \ref{def:MatrixFDPSU}.
	\end{itemize}
Since $d \in \mathcal{F}(A^{\FDPSU}(\gamma))$ it now follows
\begin{align*}
\max_{I \subset \{1,\ldots,n\}} P(\{\FDP(c) > \gamma\}) &\le \alpha \cdot \max_{I \subset \{1,\ldots,n\}} (A \cdot d)_{|I|} \\
&\le \alpha
\end{align*}
\end{proof}

\subsection{Proof of theorem \ref{theorem:MainTheorem}, statement (d)}
The following is a rephrasing of results from \citet{RomanoShaikh2006}.
\begin{prop} \label{prop:FDP.SD.bound}
Let the notation from definition \ref{def:AuxFDPSD} be given and let $c \in\mathcal{C}$. For $1\le |I|\le n$ define
\begin{align*} 
\beta_\ell &= \beta_\ell(|I|)=c_{k_{|I|}(l)}, \qquad \ell=1, \ldots,\lfloor \gamma n \rfloor+1.
\intertext{Then it holds}
P(\FDP(c)>\gamma) &\le |I| \cdot \sum_{i=1}^{N(|I|)} \frac{\beta_i(|I|)-\beta_{i-1}(|I|)}{i}.
\end{align*}
\end{prop}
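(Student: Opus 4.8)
The plan is to follow verbatim the template used for parts (a)--(c): first derive a set inclusion that expresses the event $\{\FDP(c) > \gamma\}$ for the step-down procedure as a union of threshold events of the form $\{PV_{(\ell)} \le \beta_\ell\}$, and then bound the probability of that union by the generalised Bonferroni inequality of Lemma \ref{lemma:BasicLemma}. Once the inclusion is in place, the stated bound is a direct application of the lemma, so the entire probabilistic content is concentrated in the inclusion.

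First I would establish, as a rephrasing of the step-down analysis in \citet{RomanoShaikh2006}, the inclusion
\begin{align*}
\{\FDP(c) > \gamma\} &\subset \bigcup_{\ell=1}^{N(|I|)} \{PV_{(\ell)} \le c_{k_{|I|}(\ell)}\} = \bigcup_{\ell=1}^{N(|I|)} \{PV_{(\ell)} \le \beta_\ell(|I|)\},
\end{align*}
where $PV_{(1)} \le \cdots \le PV_{(|I|)}$ are the ordered $p$-values attached to the true hypotheses. The index $k_{|I|}(\ell) = \min\{n,\, n+\ell-|I|,\, \lceil \ell/\gamma\rceil - 1\}$ encodes the two constraints that the step-down rule must satisfy when it produces an $\FDP$ exceeding $\gamma$: the term $n+\ell-|I|$ is the largest possible rank of the $\ell$-th true-null $p$-value among all $n$ $p$-values (attained when all $n-|I|$ false-null $p$-values lie below it, so that its rank equals $(\ell-1)+(n-|I|)+1$), while $\lceil \ell/\gamma\rceil - 1$ is the largest total number $R$ of rejections compatible with $\ell$ false rejections and $\FDP>\gamma$, since $\ell/R > \gamma$ forces $R \le \lceil \ell/\gamma\rceil - 1$. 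Because the step-down procedure rejects an initial segment of the ordered $p$-values, any rejected true null at rank $p \le k_{|I|}(\ell)$ satisfies $PV_{(\ell)} \le c_p \le c_{k_{|I|}(\ell)}$, which gives the inclusion.

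Next I would verify that Lemma \ref{lemma:BasicLemma} applies to this union. Each of the three arguments of the minimum defining $k_{|I|}(\ell)$ is non-decreasing in $\ell$, hence so is $k_{|I|}(\ell)$, and since $c \in \mathcal{C}$ is non-decreasing it follows that $\beta_1 \le \cdots \le \beta_{N(|I|)}$, with the convention $\beta_0 = 0$. Applying Lemma \ref{lemma:BasicLemma}(i) with $t = |I|$, $m = N(|I|)$ and $\tilde{c}_\ell = \beta_\ell(|I|)$ then yields precisely
\begin{align*}
P(\FDP(c) > \gamma) &\le |I| \cdot \sum_{\ell=1}^{N(|I|)} \frac{\beta_\ell(|I|) - \beta_{\ell-1}(|I|)}{\ell},
\end{align*}
which is the assertion of the proposition.

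The main obstacle is justifying the upper summation limit $N(|I|)$, i.e.\ showing that only the first $N(|I|)$ threshold events are needed. This is exactly where the three-way minimum defining $N(i)$ enters: beyond $\ell = |I|$ there are no further true-null $p$-values; beyond $\ell = \lfloor \gamma n\rfloor + 1$ the ratio constraint cannot be met even when all $n$ hypotheses are rejected; and the third term $\lfloor \gamma((n-|I|)/(1-\gamma)+1)\rfloor + 1$ marks the crossover at which the position bound $n+\ell-|I|$ overtakes the ratio bound $\lceil \ell/\gamma\rceil - 1$, past which the $\ell$-th event would require $R \ge n+\ell-|I| > \lceil \ell/\gamma\rceil - 1 \ge R$ and hence cannot occur jointly with $\{\FDP>\gamma\}$. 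Checking that these three thresholds combine to exactly $N(|I|)$, and that the union truncated at $N(|I|)$ still contains $\{\FDP>\gamma\}$, is the delicate bookkeeping that makes this proof a faithful rephrasing of \citet{RomanoShaikh2006} rather than a routine corollary of Lemma \ref{lemma:BasicLemma}.
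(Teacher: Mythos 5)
Your proof takes exactly the paper's route: the paper likewise establishes the inclusion $\{\FDP(c)>\gamma\}\subset\bigcup_{\ell=1}^{N(|I|)}\{PV_{(\ell)}\le\beta_\ell(|I|)\}$ by noting that $k_i$, $\beta$ and $N(i)$ coincide with the corresponding quantities in the proof of Theorem 3.4 of Romano and Shaikh (2006), whose argument does not depend on the specific form of the constants, and then applies Lemma~\ref{lemma:BasicLemma} with $t=|I|$ and $m=N(|I|)$, exactly as you do. One small caution on your heuristic (which, like the paper, you do not rely on, since the bookkeeping is deferred to Romano and Shaikh): the claim that for $\ell$ beyond the crossover the $\ell$-th event ``cannot occur jointly with $\{\FDP>\gamma\}$'' is not the right explanation --- such outcomes can occur, and indeed whenever $R>(n-|I|)/(1-\gamma)$ one has $\FDP>\gamma$ automatically; the truncation at $N(|I|)$ is legitimate because those outcomes are already covered by events with indices $\le N(|I|)$, not because they are impossible.
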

\begin{proof} Note that $N(i)$ from definition \ref{def:AuxFDPSD} is identical to (3.11) in \citet{RomanoShaikh2006}, $k_i$ corresponds to $k(s,\gamma,m,|I|)$ on p. 42 there, and $\beta$ defined above agrees with $\beta$ in (3.15) in \citet{RomanoShaikh2006}. As noted by \citet{RomanoShaikh2006}, the arguments used in the proof of Theorem 3.4 do not depend on the specific form of the original constants. This implies, as in the proof of Theorem 3.4 (bottom of p. 40 and top of p. 41) that
\begin{align*}
P(\FDP(c)>\gamma) &\le P(\bigcup_{i =1}^{N(|I|)} \{PV_{(i)} \le \beta_i (|I|)\}) \le |I| \cdot \sum_{i=1}^{N(|I|)} \frac{\beta_i(|I|)-\beta_{i-1}(|I|)}{i}
\end{align*}
where the last bound is obtained by lemma \ref{lemma:BasicLemma}.
\end{proof}
\begin{coro}\label{coro:FDP.SD.bound.1}
Let $c \in\mathcal{C}$ and $\beta$ be defined as in proposition \ref{prop:FDP.SD.bound} and $\widetilde{A}$ as in definition \ref{def:MatrixFDPSD}. Denote by $\widetilde{A}_{i \cdot}$ the $i$-th row of $\widetilde{A}$ and for $I \subset \{1,\ldots,n\}$ define $\beta(|I|)=(\beta_1(|I|),\ldots,\beta_{\lfloor \gamma n \rfloor+1}(|I|),0, \ldots,0) \in \reell^n$. For $\alpha \in (0,1)$ it holds: If 
\begin{align*}
\widetilde{A}_{1\cdot}\cdot \beta(1)^t &\le \alpha\\
\widetilde{A}_{2\cdot}\cdot \beta(2)^t &\le \alpha\\
&\vdots \\
\widetilde{A}_{n\cdot}\cdot \beta(n)^t &\le \alpha
\end{align*}
then $\max_{I \subset \{1,\ldots,n\}} P(\{\FDP(c) > \gamma\})\le \alpha$.
\end{coro}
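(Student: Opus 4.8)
The plan is to recognise that the upper bound for $P(\FDP(c)>\gamma)$ supplied by Proposition \ref{prop:FDP.SD.bound} is, for each fixed number of true hypotheses, exactly the linear form $\widetilde{A}_{|I|\cdot}\cdot\beta(|I|)^t$ appearing in the hypotheses. Once this identification is made the corollary is immediate: the assumed inequalities bound that linear form by $\alpha$ for every value of $|I|$, and since the bound from the proposition depends on the true set only through its cardinality, taking the maximum over $I$ yields $\max_I P(\FDP(c)>\gamma)\le\alpha$.

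Concretely, I would fix $I$, write $i=|I|$, $N=N(i)$, $\beta_j=\beta_j(i)$ and adopt the convention $\beta_0=0$ inherited from $c_0=0$ in Lemma \ref{lemma:BasicLemma}. Proposition \ref{prop:FDP.SD.bound} gives $P(\FDP(c)>\gamma)\le i\sum_{j=1}^{N}(\beta_j-\beta_{j-1})/j$. The one genuine computation is a summation by parts: writing the sum as $\sum_{j=1}^N \beta_j/j-\sum_{j=1}^N\beta_{j-1}/j$, reindexing the second sum and using $\beta_0=0$ converts it into $\sum_{j=1}^{N-1}\beta_j(1/j-1/(j+1))+\beta_N/N$. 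Multiplying by $i$ and reading off coefficients, the coefficient of $\beta_j$ is precisely $\widetilde{A}_{ij}$ for $1\le j\le N$ by the first two branches of Definition \ref{def:MatrixFDPSD}(a), while $\widetilde{A}_{ij}=0$ for $j>N$. Because $\beta(i)$ merely pads $\beta$ with zeros past index $\lfloor\gamma n\rfloor+1$ and $N(i)\le\lfloor\gamma n\rfloor+1$, the components $\beta_j$ with $N<j\le\lfloor\gamma n\rfloor+1$ are multiplied by vanishing matrix entries and drop out, so the bound equals $\widetilde{A}_{i\cdot}\cdot\beta(i)^t$.

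The main (and essentially only) obstacle is the bookkeeping in this summation by parts, in particular the degenerate case $N(i)=1$, where the sum $\sum_{j=1}^{N-1}$ is empty and both sides collapse to $i\beta_1$, matching the $j=N(i)$ branch $\widetilde{A}_{i1}=i/N(i)=i$. Everything else is a matter of quoting Proposition \ref{prop:FDP.SD.bound} and observing that its right-hand side is a function of $|I|$ alone, so that letting the hypotheses range over $i=1,\ldots,n$ covers every admissible true set $I$ and the maximum over $I$ is controlled by $\alpha$. This argument runs exactly parallel to the Abel-summation step already used in the proofs of parts (a) and (c).
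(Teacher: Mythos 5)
Your proposal is correct and takes essentially the same route as the paper's own proof: both apply Proposition \ref{prop:FDP.SD.bound}, regroup the telescoping sum by summation by parts so that the coefficient of $\beta_j(|I|)$ is exactly $\widetilde{A}_{|I|j}$, and then invoke the hypotheses, which bound the resulting linear form by $\alpha$ uniformly over $|I|$ (the bound depending on $I$ only through its cardinality). If anything, your write-up is slightly more careful than the paper's one-line computation, since you explicitly handle the degenerate case $N(i)=1$ and note that the padded components of $\beta(i)$ and the entries $\widetilde{A}_{ij}$ with $j>N(i)$ vanish, details the paper leaves implicit.
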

\begin{proof} For any set $I \subset \{1,\ldots,n\}$ of true hypotheses by proposition \ref{prop:FDP.SD.bound} the probability $P(\FDP(c) > \gamma)$ is bounded by
\begin{align*}
& |I| \cdot \left\{\frac{\beta_1(|I|)}{1} + \frac{\beta_2(|I|) -\beta_1(|I|)}{2} + \cdots + \frac{\beta_{N(|I|)}(|I|) -\beta_{N(|I|)-1}(|I|)}{N(|I|)} \right\} \notag\\
&= |I| \cdot \left\{ \beta_1(|I|)\cdot \left(1-\frac{1}{2} \right) + \cdots + \beta_{N(|I|)-1}(|I|)\cdot \left(\frac{1}{N(|I|)-1}-\frac{1}{N(|I|)}\right) + \frac{\beta_{N(|I|)}(|I|)}{N(|I|)} \right\}\\
&=\widetilde{A}_{|I|\cdot}\cdot \beta(|I|)^t.
\end{align*}
\end{proof}
\begin{coro} Let $\gamma \in [0,1)$.
\begin{itemize}
	\item[(a)] Let $1\le i\le n$. For any $\delta \in \reell^n$ and $\beta_m(i):=\delta_{k_i(m)}$ it holds that $\widetilde{A}_{i \cdot} \cdot \beta(i)^t=(A^\FDPSD \cdot \delta^t)_i$.
\item[(b)] For $\alpha \in (0,1)$ and $c \in\mathcal{C}$ it holds: If $||A^\FDPSD\cdot c||_{\infty} \le \alpha$ then $P(\FDP(c)> \gamma)\le \alpha$.
\end{itemize}
\end{coro}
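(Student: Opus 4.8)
The plan is to treat the two parts separately: (a) is a purely algebraic identity relating the two matrices $\widetilde{A}$ and $A^{\FDPSD}$ via the index map $k_i$, and (b) is then an immediate specialisation of (a) combined with Corollary \ref{coro:FDP.SD.bound.1}.

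For part (a), I would expand both sides and compare. On the left, since $\beta(i)=(\beta_1(i),\ldots,\beta_{\lfloor\gamma n\rfloor+1}(i),0,\ldots,0)$ with $\beta_m(i)=\delta_{k_i(m)}$, and since $\widetilde{A}_{im}=0$ for $m>N(i)$ while $N(i)\le\lfloor\gamma n\rfloor+1$, the inner product collapses to $\widetilde{A}_{i\cdot}\cdot\beta(i)^t=\sum_{m=1}^{N(i)}\widetilde{A}_{im}\,\delta_{k_i(m)}$. On the right, I would insert the definition of $A^{\FDPSD}$ from Definition \ref{def:MatrixFDPSD}(b), giving $(A^{\FDPSD}\cdot\delta^t)_i=\sum_{j\in M(i)}\bigl(\sum_{l\in k_i^{-1}(\{j\})}\widetilde{A}_{il}\bigr)\delta_j$. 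Because $\delta_j$ depends only on $j=k_i(l)$ over each fibre, this double sum reorganises into $\sum_{l}\widetilde{A}_{il}\,\delta_{k_i(l)}$, where $l$ now ranges over the whole domain $\{1,\ldots,\lfloor\gamma n\rfloor+1\}$ of $k_i$.

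The crux of (a)---and the one place that needs care---is this re-indexing: one must check that as $j$ ranges over $M(i)$ and $l$ over the fibre $k_i^{-1}(\{j\})$, the pairs exactly enumerate every $l\in\{1,\ldots,\lfloor\gamma n\rfloor+1\}$ once. This is precisely guaranteed by the definition $M(i)=k_i(\{1,\ldots,\lfloor\gamma n\rfloor+1\})$ as the image of $k_i$, so the fibres partition the domain; the map $k_i$ need not be injective, which is exactly why $A^{\FDPSD}$ is defined by summing $\widetilde{A}$ over fibres in the first place. Once this bookkeeping is settled, the right-hand sum agrees with the collapsed left-hand sum (again using $\widetilde{A}_{il}=0$ for $l>N(i)$ to restrict the range), proving (a).

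For part (b), I would apply (a) with $\delta=c$ and $i=|I|$. Then $\beta_m(|I|)=c_{k_{|I|}(m)}$ coincides with the vector $\beta(|I|)$ of Proposition \ref{prop:FDP.SD.bound}, and (a) gives $\widetilde{A}_{|I|\cdot}\cdot\beta(|I|)^t=(A^{\FDPSD}\cdot c)_{|I|}$. The hypothesis $\|A^{\FDPSD}\cdot c\|_{\infty}\le\alpha$ means $(A^{\FDPSD}\cdot c)_i\le\alpha$ for every $i$ (the entries being non-negative as $c\ge 0$ and $A^{\FDPSD}\ge 0$), hence $\widetilde{A}_{i\cdot}\cdot\beta(i)^t\le\alpha$ for all $i$. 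Corollary \ref{coro:FDP.SD.bound.1} then yields $\max_{I}P(\FDP(c)>\gamma)\le\alpha$, which is the claim. No genuine obstacle remains here; (b) is just a translation of the hypothesis, through the identity of (a), into the row-wise condition required by Corollary \ref{coro:FDP.SD.bound.1}.
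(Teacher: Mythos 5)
Your proposal is correct and follows essentially the same route as the paper: part (a) is the same re-indexing of the sum over the fibres $k_i^{-1}(\{j\})$ (the paper sums over all $j=1,\ldots,n$ with the convention that empty fibres contribute $0$, where you sum over the image $M(i)$, which is the same thing), and part (b) is the identical specialisation $\delta=c$ followed by an appeal to Corollary \ref{coro:FDP.SD.bound.1}.
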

\begin{proof} For (a) we have
\begin{align*}
\widetilde{A}_{i \cdot} \cdot \beta(i)^t &= \sum_{\ell=1}^{\lfloor \gamma n \rfloor+1} \widetilde{A}_{i\ell} \cdot \beta_\ell (i) = \sum_{\ell=1}^{\lfloor \gamma n \rfloor+1} \widetilde{A}_{i\ell} \cdot \delta_{k_i(\ell)} \qquad \text{(by definition of $\beta$)}\\
&=\sum_{j=1}^n  \delta_j \cdot \left( \sum_{\ell: k_i(\ell)=j} \widetilde{A}_{i\ell} \right) =\sum_{j=1}^n \delta_j \cdot \left( \sum_{\ell \in k_i^{-1}(\{j\})} \widetilde{A}_{i\ell} \right)\\
&=(A^{\textnormal{FDP-SD}}\cdot \delta^t)_i,
\end{align*}
where in the first equality of the second row the convention $\sum_{\varnothing}\widetilde{A}_{i\ell}=0$ was used.

For part (b) note that if $||A^\FDPSD\cdot c||_{\infty} \le \alpha$ then this means by part (a) that $\max(\widetilde{A}_{1\cdot}\cdot \beta(1)^t, \ldots , \widetilde{A}_{n\cdot}\cdot \beta(n)^t)\le \alpha$ for $\beta_m(i):=c_{k_i(m)}$ and the claim then follows from corollary \ref{coro:FDP.SD.bound.1}.
\end{proof}
Thus theorem \ref{theorem:MainTheorem}, statement (d) is proved since for $d \in \mathcal{F}(A^{\FDPSD}(\gamma))$ and $c=\alpha \cdot d$ it now follows $||A^\FDPSD\cdot c||_{\infty}= \alpha \cdot||A^\FDPSD\cdot d||_{\infty}\le \alpha$ and part (b) from the above corollary yields the result.
\subsection{Comments}
For the step-up $\kFWER$ and $\FDP$ procedures, \citet{RomanoShaikh2006AOS} have proved that the choice $D=||A \cdot c||_{\infty}$ (with associated matrix $A$) is the smallest possible constant one can use for rescaled procedures of the form $c/D$ and still maintain control of the corresponding error rates. The key ingredient to their proof is part (ii) of lemma \ref{lemma:BasicLemma}.

For $\kFWER$ step-down procedures \citet[Theorem 2.3 (ii)]{LehmannRomano2005} show that none of the Lehmann-Romano constants $c_i=\frac{k}{n+k-i}$ for $i>k$ can be improved without violating the $\kFWER$. For $\FDP$ step-down procedures, \citet{RomanoShaikh2006} give an example that suggests that $D=||A \cdot c||_{\infty}$ is very nearly the smallest possible constant $d$ such that $c/d$ still controls $\FDP$, but no proof is given that this constant possesses the same optimality property as in the step-up case.

The modified FDP procedures introduced in section \ref{sec:NewFDPProcedures} by construction can not be improved without violating the LR bounds, i.e. without leading to $||A \cdot \xi||_{\infty}>1$. However, it is unclear whether this can also imply $P(\FDP > \gamma) > \alpha$. In the step-up case, the arguments given by \citet{RomanoShaikh2006} depend crucially on considering only linear modifications of the original procedures. Therefore these arguments do not seem applicable to investigating whether the modified procedures from section \ref{sec:NewFDPProcedures} can be improved any further.

\section{Simulation study} \label{sec:SimulationStudy}
In this section we investigate the power of the different FDP procedures in a simulation study. We consider the following procedures:
\begin{itemize}
	\item FDP-BH-SU and its modified variant FDP-BH-SU (mod),
	\item FDP-RS-SU and its modified variant FDP-RS-SU (mod),
	\item FDP-BH-SD and its modified variant FDP-BH-SD (mod),
	\item FDP-RS-SD and its modified variant FDP-RS-SD (mod).
\end{itemize}
The goals of the study are three-fold:
\begin{enumerate}
	\item to compare the power of the modified procedures with their original counterparts,
	\item to compare the power between the modified procedures,
	\item to compare the best FDP procedure (if it exists) with FDR controlling procedures. 
\end{enumerate}
To make the last comparison more consistent, we use for the step-up direction the \citet{BenjaminiYekutieli01} procedure FDR-BY-SU with critical constants
\begin{align*}
c_i^{BY}&=c_i^{BY}(n)=c_i^{BH}/D, \quad \text{where} \quad D=1+\frac{1}{2}+ \cdots + \frac{1}{n}
\end{align*}
which controls the FDR under arbitrary dependence. For the step-down direction we use the rescaled BH constants obtained by \citet{GuoRao2008}, i.e. 
\begin{align*}
c_i^{GR}&=c_i^{GR}(n)=c_i^{BH}/D, \quad \text{where} \\
D& = \max_{i=1, \ldots,n} \frac{i}{n} \left\{\sum_{j=1}^{n-i+1} \frac{1}{j} + \frac{n-i}{n-i+1}- \frac{n-i}{n} \right\}.
\end{align*}
We denote this approach by FDR-GR-SD. Similarly to \citet{RomShaWoETh2008} we control the median FDP as an alternative to controlling the FDR. We do this at the $.05$-level, i.e. $P(\FDP>0.05)\le 0.5$, while the FDR procedures control the expectation $\erw(\FDP)\le 0.05$. As \citet{RomShaWoETh2008} point out, the median FDP is a less stringent measure than the FDR in the sense that the probability of the FDP exceeding $0.05$ can be much bigger when the median FDP is controlled than when the FDR is controlled. 

For MTPs there are several ways to measure power, see e.g. \citep[Section 1.2.10]{DudoitLaan2007}. We use average power, i.e. the average proportion of rejected false hypotheses, for comparing procedures. We assume equicorrelated multivariate normal test statistics, i.e. $T=(T_1, \ldots, T_n) \sim \NormVert(\mu, \Sigma)$ with $\mu_i=0 $ for $i=1, \ldots, |I|$, $\mu_i=d$ for $i= |I|+1, \ldots,n$ and $\Sigma_{ij}=1/2$ for $i \neq j$ and  $\Sigma_{ij}=1$ else. For the parameter $d$, three nonzero values were used: $d = 0.1, 1$ and $3$, reflecting small, moderate and large deviations from the null hypotheses. For each simulated vector of test-statistics $p$-values were calculated for the gaussian test of the null hypotheses $H^0_i: \mu_i =  0$ (two-sided).  The number of tests performed was set to one of the values $10$, $50$, $100$ and $500$ reflecting small, medium and (moderately) large multiplicity of tests. We used $20000$ simulations in the simulation study which gives a uniform upper bound for the standard errors of $0.0035$. 

Figure \ref{fig:FDPvsModifiedFDPAverageProportionRejections_20000} depicts the gains in average power of the modified FDP procedures over the original (rescaled) variants. 
\begin{sidewaysfigure}[htbp]
	\centering
		\includegraphics[width=1.00\textwidth]{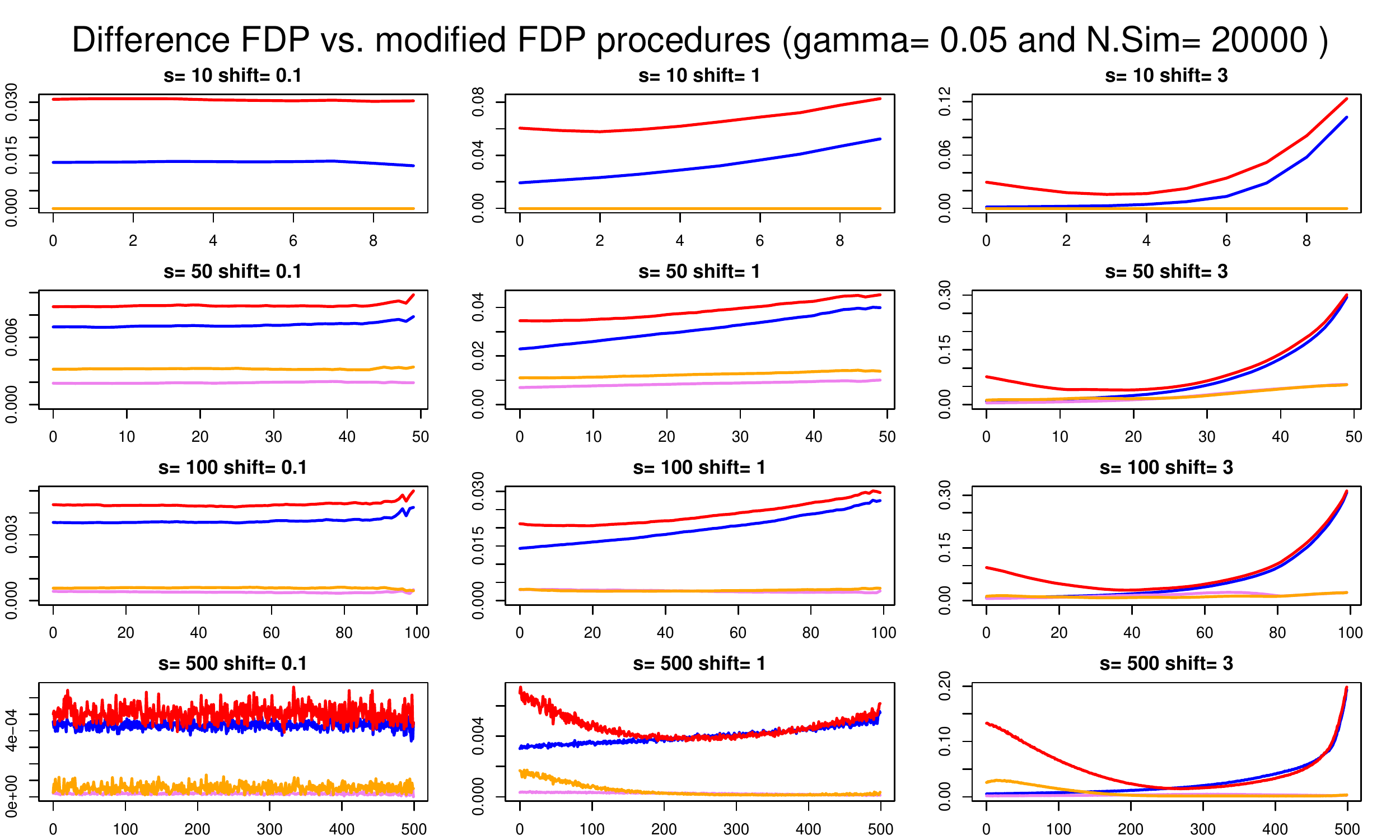}
	\caption{Difference of simulated average power for modified FDP procedures vs original (rescaled) FDP procedures (the $x$-axis is the number of true hypotheses). Shown are BH-SU (blue), RS-SU (violet), BH-SD (red), RS-SD (orange).}
	\label{fig:FDPvsModifiedFDPAverageProportionRejections_20000}
\end{sidewaysfigure} 
For most constellations, the gains in power are considerably larger for the BH-type procedures than for the RS-type procedures. Put differently, the RS procedures perform so well that in many situations none or only little improvement is possible.

\begin{sidewaysfigure}[htbp]
	\centering
		\includegraphics[width=1.00\textwidth]{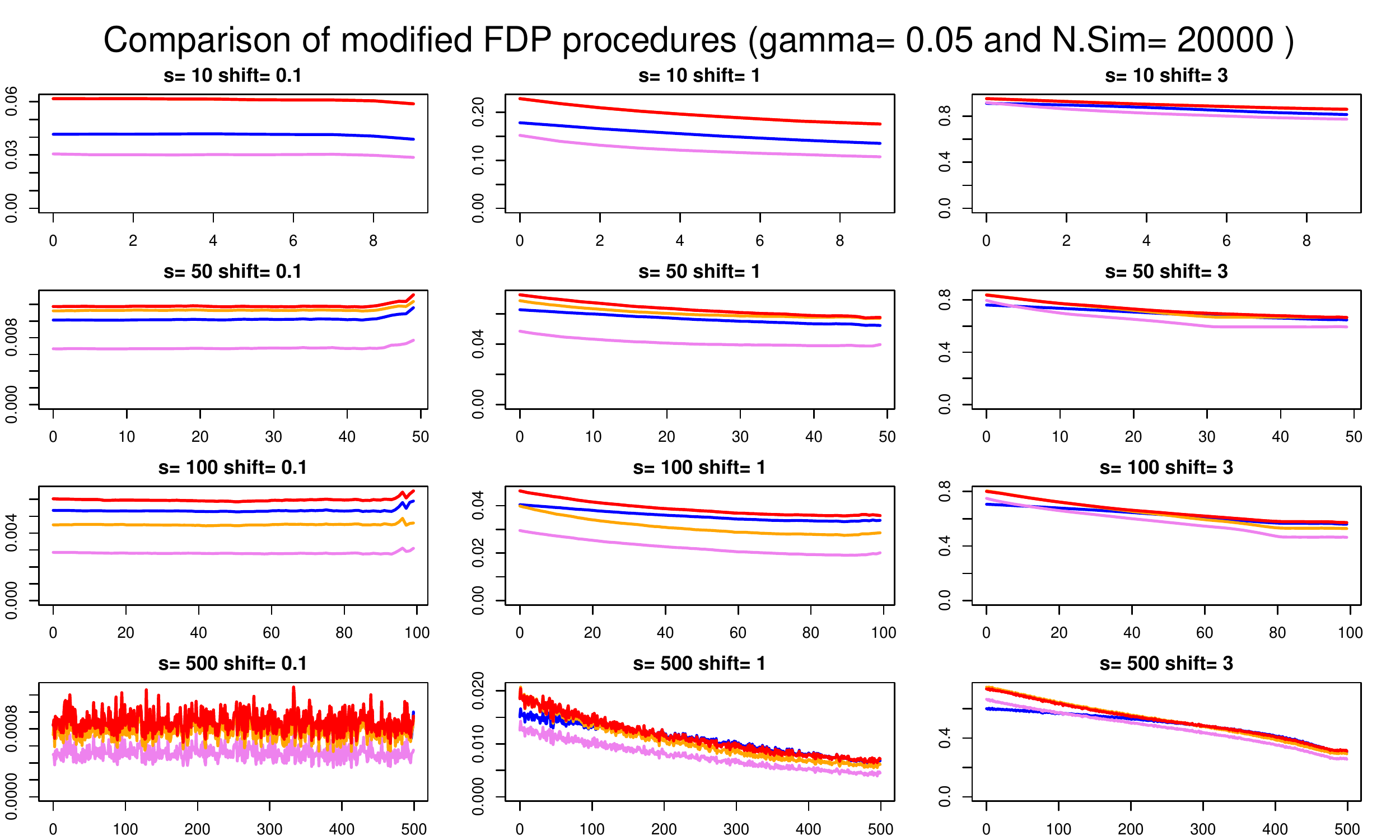}
	\caption{Simulated average power for modified FDP procedures. Shown are BH-SU (blue), RS-SU (violet), BH-SD (red), RS-SD (orange).}
	\label{fig:SelectedAverageProportionRejections_20000}
\end{sidewaysfigure} 

Figure \ref{fig:SelectedAverageProportionRejections_20000} presents a comparison of the four modified FDP-controlling procedures. The FDP-BH-SD procedure usually performs best and is followed closely by FDP-RS-SD and FDP-BH-SU. 
\begin{sidewaysfigure}[htbp]
	\centering
		\includegraphics[width=1.00\textwidth]{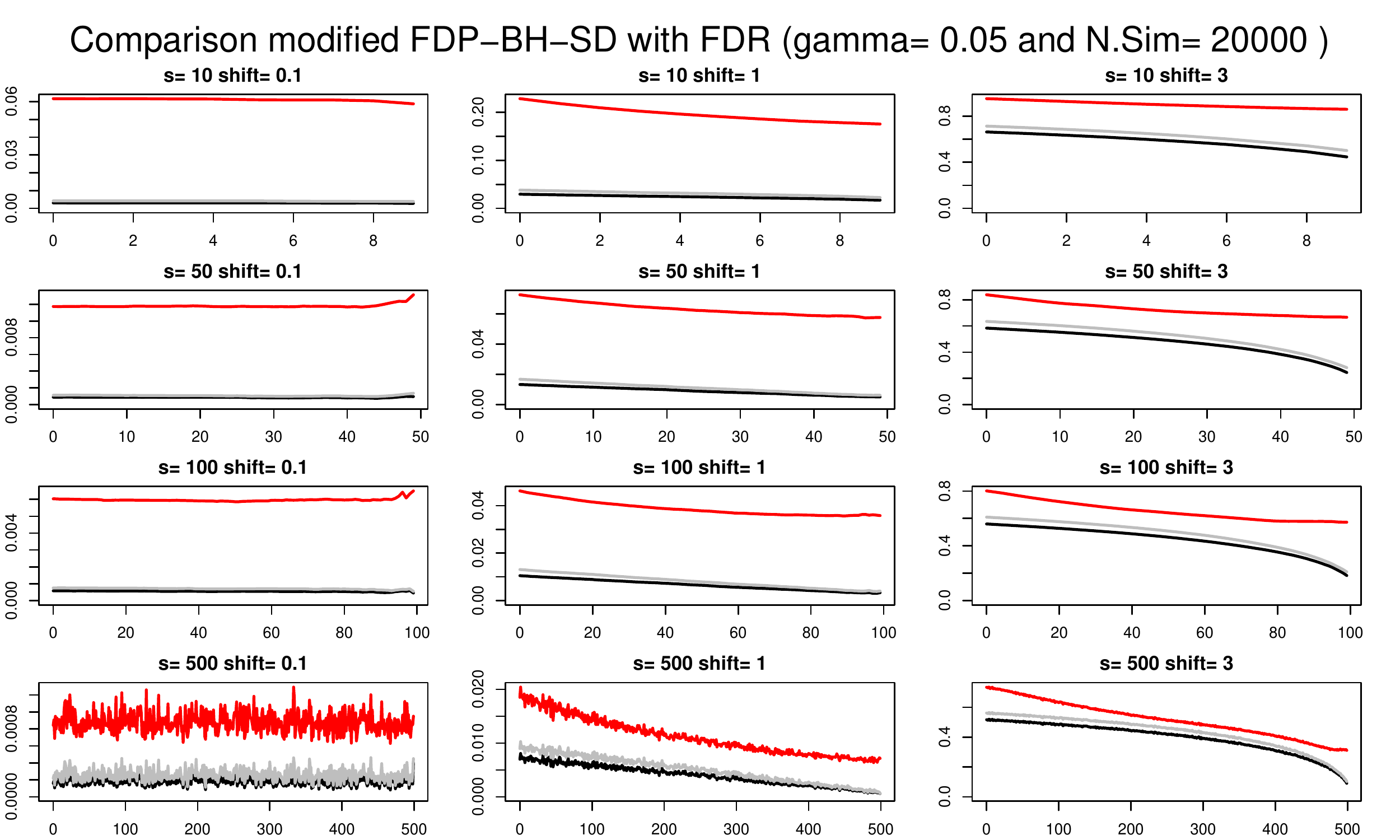}
	\caption{Simulated average power for modified FDP-BH-SD (red), FDR-BY-SU (black) and FDR-GR-SD (grey).}
	\label{fig:FDPvsFDRAverageProportionRejections_20000}
\end{sidewaysfigure} 

Figure \ref{fig:FDPvsFDRAverageProportionRejections_20000} compares the modified procedure FDP-BH-SD (mod) with the FDR procedures BY-SU and GR-SD. The median FDP-BH-SD posesses the highest power for all constellations while FDR-GR-SD and FDR-BY-SU perform very similarly.

Altogether we conclude that
\begin{itemize}
	\item modifying the rescaled FDP procedures resulted in increased power for all four procedures. The largest gains were achieved for the BH-type procedures,
	\item for the constellations considered here, FDP-BH-SD (mod) performed best, with FDP-SR-SD (mod) or FDP-BH-SU (mod) usually coming in a close second,
	\item the best modified median FDP procedure outperformed the FDR-controlling procedures that were rescaled in order to account for general dependence.
\end{itemize}
 
\section{Empirical applications}\label{sec:EmpiricalApplications}
In this section we compare the performance of the FDP and FDR approaches from the previous section for some empirical data.
\subsection{Benjamini-Hochberg data} \label{ssec:BHData}
We revisit the data analysed in \cite{BenjaminiHochberg95}, consisting of 15 $p$-values from a study on myocardial infarctation. Table \ref{tab:BHData} gives the results of applying the median FDP and FDR procedures at levels $q=0.05$ (note that in this case $\gamma-\FDP=\FWER$) and $q=0.10$, i.e. $P(\FDP>q)\le 0.5$ and $\erw(\FDP)\le q$.
\begin{table}[htb]
	\centering
\begin{tabular}{lcc}

           & \multicolumn{ 2}{c}{Number of rejections} \\ \cmidrule(lr){2-3}

    Method &     $q=0.05$ &     $q=0.10$ \\
\hline
\hline
 FDP-BH-SU &          9 &          9 \\

FDP-BH-SU (mod) &          9 &          9 \\

 FDP-RS-SU &          5 &          4 \\

FDP-RS-SU (mod) &          5 &          5 \\\hline

 FDR-BY-SU &          3 &         3 \\\hline

 FDP-BH-SD &         10 &         10 \\

FDP-BH-SD (mod) &         10 &         10 \\

 FDP-RS-SD &         10 &         10 \\

FDP-RS-SD (mod) &         10 &         10 \\\hline

 FDR-GR-SD &          3 &          4 \\
\hline
\end{tabular}    
\caption{Number of rejected hypotheses for the Benjamini-Hochberg data. }
\label{tab:BHData}
\end{table}
For $q=0.05$, the step-down procedures performed best, followed by the step-up FDP-BH and FDP-RS methods. The FDR procedures rejected the fewest hypotheses. Note that the FDP-RS-SU procedure rejects fewer hypotheses at level $0.10$ than at level $0.05$. This behavior is due to the fact that both the original constants and the scaling constant $D$ depend on the parameter $\gamma$. In this special case it means that $c_i^{0.05}\le c_i^{0.10}$ only for $i \in \{10,\ldots,14\}$. For the FDP-BH procedures this can not happen, since the original constants do not depend on the parameter $\gamma$ and the scaling constants are increasing in $\gamma$.  

\subsection{Westfall-Young data} \citet{WestYoung93} use resampling methods to analyze data from a complex epidemiological survey designed to assess the mental health of urban and rural individuals living in central North Carolina. The data consists of 72 raw $p$-values (see \citet[table 7.42]{WestYoung93}), with 25 of them $<0.05$ and 9 of the adjusted $p$-values $<0.05$. Table \ref{tab:WYData} displays the number of rejections when using the median FDP and FDR controlling procedures introduced above. 
\begin{table}[htb]
	\centering
\begin{tabular}{lcc}

           & \multicolumn{ 2}{c}{Number of rejections} \\

    Method &     $q=0.05$ &     $q=0.10$ \\
\hline
\hline
 FDP-BH-SU &          11 &         11  \\

FDP-BH-SU (mod) &          11 &          12 \\

 FDP-RS-SU &          10 &          11  \\

FDP-RS-SU (mod) &          11 &          11  \\\hline

 FDR-BY-SU &          10 &        10 \\\hline

 FDP-BH-SD &         11 &         11 \\

FDP-BH-SD (mod) &         12 &         12 \\

 FDP-RS-SD &         11 &         12 \\

FDP-RS-SD (mod) &         11 &         12 \\\hline

 FDR-GR-SD &          10 &          11 \\
\hline
\end{tabular}    
\caption{Number of rejected hypotheses for the Westfall-Young data. }
\label{tab:WYData}
\end{table}
All procedures reject at least one additional hypothesis. For level $q=0.05$, all median FDP procedures except RS-SU perform better than the FDR procedures; the modified BH-SD procedure is the only procedure that rejects three additional hypotheses. For $q=0.10$ the step-down FDP procedures seem to work best.

%\subsection{Benjamini-Liu data}
%\cite{BenjaminiLiu1999} analyse the $p$-values $0.0058,0.0362,0.0972$ and $0.444$ stemming from a clinical breast cancer trial. Table \ref{tab:BLData} gives the results of applying the median FDP and FDR procedures at levels $q=0.05$ and $q=0.10$, i.e. $P(\FDP>q)\le 0.5$ and $\erw(FDP)\le q$.
%\begin{table}[htb]
%	\centering
%\begin{tabular}{lcc}
%
%           & \multicolumn{ 2}{c}{Number of rejections} \\
%
%    Method &     $q=0.05$ &     $q=0.10$ \\
%\hline
%\hline
% FDP-BH-SU &          3 &          3 \\
%
%FDP-BH-SU (mod) &          3 &          3 \\
%
% FDP-RS-SU &          3 &          3 \\
%
%FDP-RS-SU (mod) &          3 &          3 \\\hline
%
% FDR-BY-SU &          1 &         1 \\\hline
%
% FDP-BH-SD &         3 &         3 \\
%
%FDP-BH-SD (mod) &         4 &         4 \\
%
% FDP-RS-SD &         4 &         4 \\
%
%FDP-RS-SD (mod) &         4 &         4 \\\hline
%
% FDR-GR-SD &          1 &          3 \\
%\hline
%\end{tabular}    
%\caption{Number of rejected hypotheses for the Benjamini-Liu data. }
%\label{tab:BLData}
%\end{table}
%Except for the FDR-BL-SD procedure the results are identical for both levels of $q$. The FDP-SD procedures (excluding the BH variant) perform best. At level $0.10$, FDR-BL-SD is able to reject one additional hypothesis.

\subsection{Hedenfalk data} \label{ssec:HedenfalkData}
The data come from the breast cancer cDNA microarray experiment of Hedenfalk et al. (2001). In the original experiment, comparison was made between 3,226 genes of two mutation types, BRCA1 (7 arrays) and BRCA2 (8 arrays). The data included here are $p$-values obtained from a two- sample t-test analysis on a subset of 3,170 genes, as described in \citet{Storey2003}. Table \ref{tab:HedenfalkData} gives the results of applying the median FDP and FDR procedures at levels $q=0.05$ and $q=0.10$, i.e. $P(\FDP>q)\le 0.5$ and $\erw(FDP)\le q$.
\begin{table}[htb]
	\centering
\begin{tabular}{lcc}

           & \multicolumn{ 2}{c}{Number of rejections} \\

    Method &     $q=0.05$ &     $q=0.10$ \\
\hline
\hline
 FDP-BH-SU &          0 &          1\\

FDP-BH-SU (mod) &          6 &         10 \\

 FDP-RS-SU &          3 &          3 \\

FDP-RS-SU (mod) &          3 &          3\\\hline

 FDR-BY-SU &          0 &         1 \\\hline

 FDP-BH-SD &         0 &         1 \\

FDP-BH-SD (mod) &         7 &         4 \\

 FDP-RS-SD &         6 &         4 \\

FDP-RS-SD (mod) &         6 &         4 \\\hline

 FDR-GR-SD &          0 &          1 \\
\hline
\end{tabular}    
\caption{Number of rejected hypotheses for the Hedenfalk data. }
\label{tab:HedenfalkData}
\end{table}
\begin{figure}[htb]
    \subfigure{\includegraphics[width=0.49\textwidth]{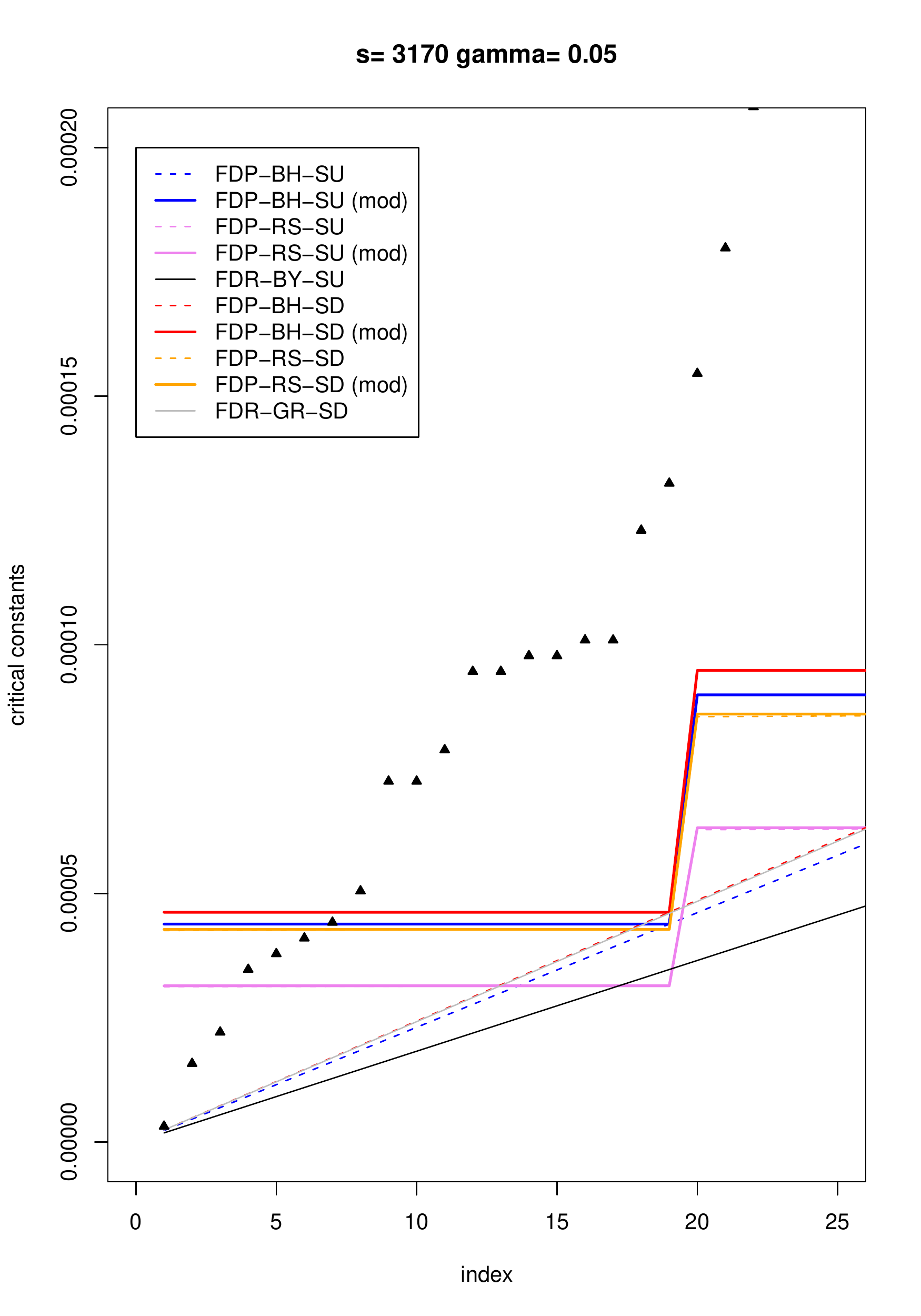}}
    \subfigure{\includegraphics[width=0.49\textwidth]{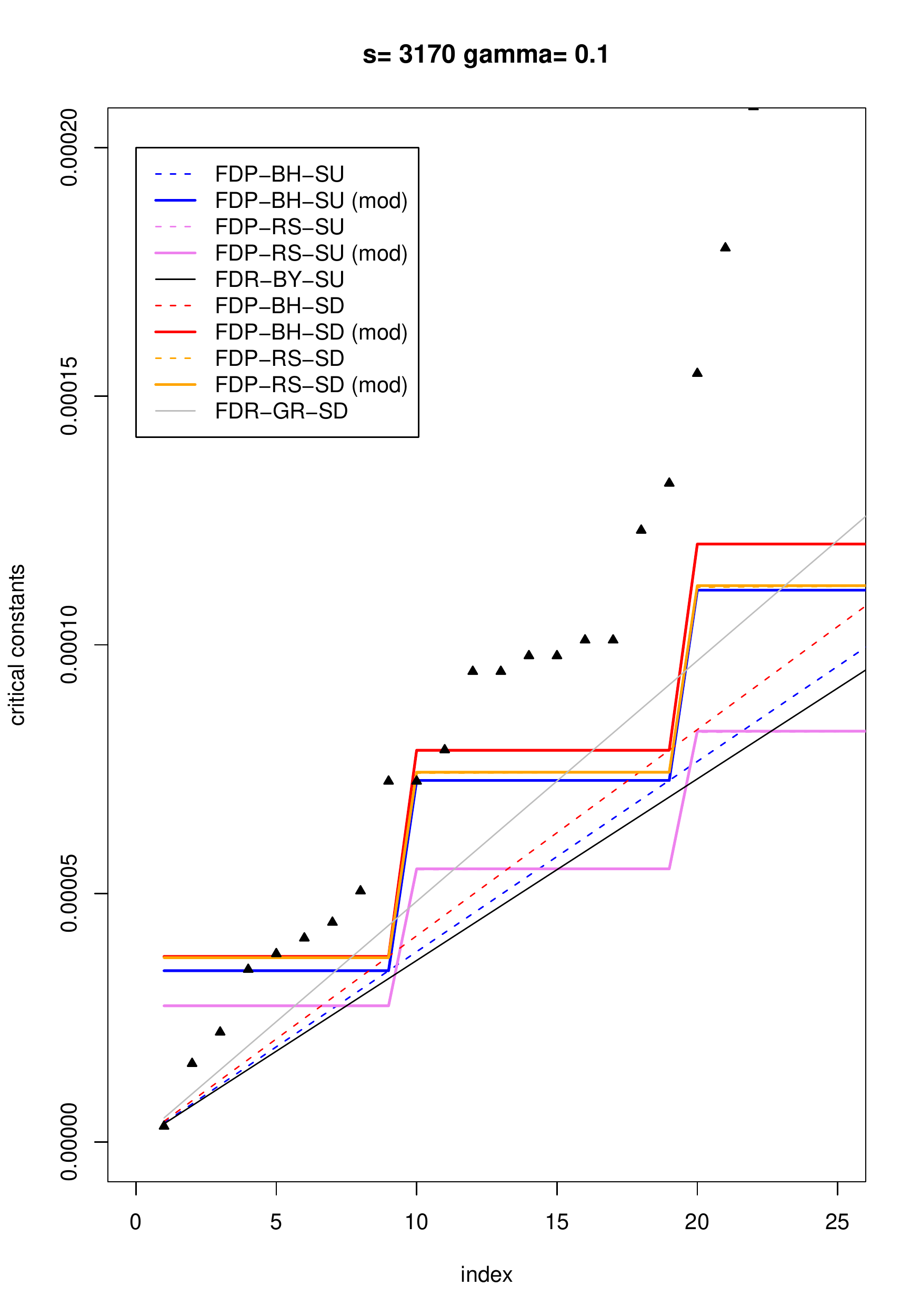}}
\caption{$p$-values (solid triangles) and multiple testing procedures for the Hedenfalk data (left panel: $\gamma=0.05$, right panel: $\gamma=0.01$).}
\end{figure}  
Again, the step-down FDP procedures perform better than their step-up counterparts, the modified median BH-SD procedure rejecting the most hypotheses. While all FDP procedrues except BU-SU reject more hypothese than both FDR approaches, we might hope for more powerful procedures. One alternative idea could be to use resampling methods in order to account for dependencies. However, as \citet{Pounds2006} points out, the power of these methods "will be severely limited, when the sample size is small". When the dependency between the $p$-values is assumed to be strong and extensive he tentatively recommends the FDR-BY-SU procedure.  
%\begin{figure}[htbp]
%	\centering
%		%\includegraphics[width=1.00\textwidth]{DistrTMIN.pdf}
%		\includegraphics[width=1\textwidth]{Hedenfalk_05.pdf}
%		\includegraphics[width=1\textwidth]{Hedenfalk_10.pdf}
%		%\includegraphics{DistrTMIN.pdf}
%	\caption{To do}
%	\label{fig:Explanation50_05}
%\end{figure} 

\section{Discussion} \label{sec:Discussion}
In this paper we have used results from \citet{RomanoShaikh2006,RomanoShaikh2006AOS} to obtain sufficient criteria for generalised error rates under general dependence in terms of systems of linear inequalities. These systems of linear inequalities describe the set of feasible points of a suitable linear optimisation problem. This property can be used to obtain modified multiple testing procedures which can improve on the rescaled procedures introduced in \citet{RomanoShaikh2006,RomanoShaikh2006AOS}. In a simulation study we have observed that these modified procedures can posess considerably more power than the original procedures.

While the focus of this work was on developing more powerful multiple testing procedures, \citet{HomBretz2008} have formulated additional desirable properties for such pocedures. Even though all methods considered here satisfy the property of coherence they are not particularly simple to describe and to communicate to non-statisticians. Since the modified procedures are obtained from a computationally complex numerical optimisation technique, the resulting sequence of critical constants will generally not exhibit any aesthetic mathematical patterns like e.g. the Bonferroni-Holm procedures. Another potential drawback from an aesthetical perspective may be related to what \citet{HomBretz2008} describe as  monotonicity properties of multiple testing procedures. While all procedures considered here yield monotonic decisions with respect to the corresponding type 1 error, it could be pointed out that additional monotonicity properties are conceivable that are not satisified by some of them. As a case in point, reconsider for $n=15$ the RS-procedures for $\gamma=0.05$ and $\gamma=0.10$ (see section \ref{ssec:BHData}). A numerical evaluation shows that $c^{0.05}_i<c^{0.10}_i$ holds true only for $i=10,\ldots,14$. Thus it may happen that more hypotheses are rejected for $\gamma=0.05$ than for $\gamma=0.1$ (at the same level of type 1 error) even though one would expect that the requirement $\FDP \le 0.05$ is more stringent than $\FDP \le 0.10$. The reason for this behavior is that both the original critical constants and the scaling constant $D$ depend on the parameter $\gamma$. For the FDP-BH procedures the original critical constants do not depend on $\gamma$. Numerical computations suggest that the scaling constants for FDP-BH are increasing in $\gamma$ and so this effect seems to be avoided by the FDP-BH procedures.

Another issue is the computational complexity of solving the linear programming problem which is needed to obtain the modified procedures. As a case in point, the calculation of the modified procedures used for analysing the Hedenfalk data with $n=3170$ in section \ref{ssec:HedenfalkData} took approximately nine hours on a Intel Xeon 5620 processor using the R-function \verb+Rglpk_solve_LP+. For multiple testing problems where the number of tests is significantly larger we thus expect run-time problems depending on the software and hardware available.

Finally, concerning other error rates like the FDR it seems natural to ask whether there are similar ways of modifying existing procedures under arbitrary dependence of the $p$-values. Recall that the key to modifying FDP procedures in section \ref{sec:NewFDPProcedures} was the observation that an improvement is possible whenever the $|I|^\star$-th row of matrix $A$ (with $|I|^\star= \arg\max_{|I|} (A \cdot c)_{|I|}$) contains at least one zero entry. \citet{GuoRao2008} have obtained bounds for the FDR that are similar to the bounds in theorem \ref{theorem:MainTheorem}, i.e. with $\FDR(c) \le \| A \cdot c\|_{\infty}$ for a suitable matrix $A$. However, as their Theorems 4.2 and 5.2 show, the corresponding step-up and step-down matrices do not contain any zero elements. Therefore, while the linear optimisation approach could still be used to define new procedures e.g. via an unconstrained linear program, it will not be possible to attain strict improvements along the lines of section \ref{sec:NewFDPProcedures}.

\bibliographystyle{chicago}
\label{bibliography}
\bibliography{Vorlage}

\begin{thebibliography}{}

\bibitem[\protect\citeauthoryear{Benjamini and Hochberg}{Benjamini and
  Hochberg}{1995}]{BenjaminiHochberg95}
Benjamini, Y. and Y.~Hochberg (1995).
\newblock Controlling the false discovery rate: A practical and powerful
  approach to multiple testing.
\newblock {\em Journal of the Royal Statistical Society. Series B\/}~{\em
  57\/}(1), 289--300.

\bibitem[\protect\citeauthoryear{Benjamini and Yekutieli}{Benjamini and
  Yekutieli}{2001}]{BenjaminiYekutieli01}
Benjamini, Y. and D.~Yekutieli (2001).
\newblock The control of the false discovery rate in multiple testing under
  dependency.
\newblock {\em The Annals of Statistics\/}~{\em 29\/}(4), 1165--1188.

\bibitem[\protect\citeauthoryear{Dantzig}{Dantzig}{1963}]{Dantzig63}
Dantzig, G. (1963).
\newblock {\em {Linear Programming and Extensions}}.
\newblock {Princeton University Press}.

\bibitem[\protect\citeauthoryear{Dudoit and {van der Laan}}{Dudoit and {van der
  Laan}}{2007}]{DudoitLaan2007}
Dudoit, S. and M.~J. {van der Laan} (2007).
\newblock {\em Multiple Testing Procedures and Applications to Genomics}.
\newblock Springer Series in Statistics. Springer.
\newblock ISBN: 978-0-387-49316-9.

\bibitem[\protect\citeauthoryear{Falk}{Falk}{1989}]{Falk1989}
Falk, R.~W. (1989).
\newblock Hommel's bonferroni-type inequality for unequally spaced levels.
\newblock {\em Biometrika\/}~{\em 76\/}(1), 189--191.

\bibitem[\protect\citeauthoryear{Guo, He, and Sarkar}{Guo
  et~al.}{2012}]{GuoHeSarkar2012}
Guo, W., L.~He, and S.~Sarkar (2012).
\newblock Further results on controlling the false discovery proportion.
\newblock {\em Annals of Statistics\/}.
\newblock (under revision).

\bibitem[\protect\citeauthoryear{Guo and Rao}{Guo and Rao}{2008}]{GuoRao2008}
Guo, W. and M.~B. Rao (2008).
\newblock On control of the false discovery rate under no assumption of
  dependency.
\newblock {\em Journal of Statistical Planning and Inference\/}~{\em
  138\/}(10), 3176 -- 3188.

\bibitem[\protect\citeauthoryear{Hommel}{Hommel}{1983}]{Hommel1983}
Hommel, G. (1983).
\newblock Tests for the overall hypothesis for arbitrary dependence structures.
\newblock {\em Biometrical Journal\/}~{\em 25}, 423–--430.

\bibitem[\protect\citeauthoryear{Hommel and Bretz}{Hommel and
  Bretz}{2008}]{HomBretz2008}
Hommel, G. and F.~Bretz (2008).
\newblock Aesthetics and power considerations in multiple testing - a
  contradiction?
\newblock {\em Biometrical Journal\/}~{\em 50\/}(5), 657--666.

\bibitem[\protect\citeauthoryear{Hommel and Hoffman}{Hommel and
  Hoffman}{1987}]{HommHoff87}
Hommel, G. and T.~Hoffman (1987).
\newblock Controlled uncertainty.
\newblock In P.~Bauer, G.~Hommel, and S.~{E.} (Eds.), {\em Multiple Hypothesis
  Testing}, pp.\  154--161. Springer.

\bibitem[\protect\citeauthoryear{Lehmann and Romano}{Lehmann and
  Romano}{2005}]{LehmannRomano2005}
Lehmann, E. and J.~P. Romano (2005).
\newblock {Generalizations of the familywise error rate.}
\newblock {\em Ann. Stat.\/}~{\em 33\/}(3), 1138--1154.

\bibitem[\protect\citeauthoryear{Pounds}{Pounds}{2006}]{Pounds2006}
Pounds, S.~B. (2006).
\newblock Estimation and control of multiple testing error rates for microarray
  studies.
\newblock {\em Briefings in Bioinformatics\/}~{\em 7\/}(1), 25--36.

\bibitem[\protect\citeauthoryear{R\"ohmel and Streitberg}{R\"ohmel and
  Streitberg}{1987}]{RoehmelStreitberg1987}
R\"ohmel, J. and B.~Streitberg (1987).
\newblock Zur konstruktion globaler tests.
\newblock {\em EDV Med. Biol.\/}~{\em 18}, 7--11.

\bibitem[\protect\citeauthoryear{Romano and Shaikh}{Romano and
  Shaikh}{2006a}]{RomanoShaikh2006}
Romano, J.~P. and A.~M. Shaikh (2006a).
\newblock {On stepdown control of the false discovery proportion.}
\newblock In J.~e. Rojo (Ed.), {\em {Optimality. The second Erich L. Lehmann
  symposium. Selected papers based on the presentations at the symposium,
  Houston, TX, USA, May 19--22, 2004.}} {Institute of Mathematical Statistics
  Lecture Notes - Monograph Series 49. Beachwood, OH: IMS, Institute of
  Mathematical Statistics. xix, 339~p. }.

\bibitem[\protect\citeauthoryear{Romano and Shaikh}{Romano and
  Shaikh}{2006b}]{RomanoShaikh2006AOS}
Romano, J.~P. and A.~M. Shaikh (2006b).
\newblock Stepup procedures for control of generalizations of the familywise
  error rate.
\newblock {\em Ann. Stat.\/}~{\em 34\/}(4), 1850--1873.

\bibitem[\protect\citeauthoryear{Romano, Shaikh, and Wolf}{Romano
  et~al.}{2008}]{RomShaWoETh2008}
Romano, J.~P., A.~M. Shaikh, and M.~Wolf (2008).
\newblock Formalized data snooping based on generalized error rates.
\newblock {\em Econometric Theory\/}~{\em 24\/}(02), 404--447.

\bibitem[\protect\citeauthoryear{Romano and Wolf}{Romano and
  Wolf}{2010}]{RomanoWolf2010}
Romano, J.~P. and M.~Wolf (2010).
\newblock {Balanced control of generalized error rates.}
\newblock {\em Ann. Stat.\/}~{\em 38\/}(1), 598--633.

\bibitem[\protect\citeauthoryear{Storey and Tibshirani}{Storey and
  Tibshirani}{2003}]{Storey2003}
Storey, J.~D. and R.~Tibshirani (2003).
\newblock Statistical significance for genomewide studies.
\newblock {\em Proceedings of the National Academy of Sciences\/}~{\em
  100\/}(16), 9440--9445.

\bibitem[\protect\citeauthoryear{Westfall and Young}{Westfall and
  Young}{1993}]{WestYoung93}
Westfall, P. and S.~Young (1993, January).
\newblock {\em Resampling-Based Multiple Testing: Examples and Methods for
  p-Value Adjustment}.
\newblock Wiley-Interscience.

\end{thebibliography}

\end{document}